\renewcommand{\d}{\mathrm{d}}
\newcommand{\scri}{{\mathscr I}}
\newcommand{\hook}{{\setlength{\unitlength}{11pt}   % adjust pt size here
                   \begin{picture}(.833,.8)
                   \put(.15,.08){\line(1,0){.35}}
                   \put(.5,.08){\line(0,1){.5}}
                   \end{picture}}}
\newtheorem{definition}{Definition}
\newtheorem{theorem}{Theorem}
\newtheorem{proposition}{Proposition}
\newtheorem{lemma}{Lemma}
\newtheorem{remark}{Remark}
\newtheorem{lem}{Lemma}[section]
\begin{document}
%\selectlanguage{english}
\mbox{} \thispagestyle{empty}

\begin{center}
\bf{\Huge Peeling for tensorial wave equations on Schwarzschild spacetime} \\

\vspace{0.1in}

{PHAM Truong Xuan\footnote{Faculty of Pedagogy, VNU University of Education, Vietnam National University, Hanoi, 144 Xuan Thuy, Cau Giay, Hanoi, Viet Nam \\Email~: phamtruongxuan.k5@gmail.com}}
\end{center}

{\bf Abstract.} In this paper, we establish the asymptotic behaviour along outgoing and incoming radial geodesics, i.e., the peeling property for the tensorial Fackrell-Ipser and spin $\pm 1$ Teukolsky equations on Schwarzschild spacetime. Our method combines a conformal compactification with vector field techniques to prove the two-side estimates of the energies of tensorial fields through the future and past null infinity $\scri^\pm$ and the initial Cauchy hypersurface $\Sigma_0 = \left\{ t=0 \right\}$ in a neighbourhood of spacelike infinity $i_0$ far away from the horizon and future timelike infinity. Our results obtain the optimal initial data which guarantees the peeling at all orders. 

{\bf Keywords.} Peeling, vector field techniques, black holes, tensorial Fackerell-Ipser equations, spin $\pm 1$ Teukolsky equations, Schwarzschild metric, null infinity, Penrose's conformal compactification.

{\bf Mathematics subject classification.} 35L05, 35P25, 35Q75, 83C57.

\tableofcontents

\section{Introduction}
The Peeling is a type of asymptotic behaviour of zero rest-mass fields initially discovered by R. Sachs  \cite{Sa61,Sa62}. Its initial formulation involved an expansion of the field in powers of $1/r$ along a null geodesic going out to infinity, and the alignment of a certain number of principal null directions of each term in the expansion along the null geodesic considered. Penrose introduced the conformal technique in the early 1960s \cite{Pe63,Pe64} and used it to establish that the peeling property is equivalent to the mere continuity of the rescaled field at null infinity \cite{Pe65}. 

The peeling for linearized gravity and for full gravity has been studied intensively (see Friedrich \cite{HFri2004}, Christodoulou-Klainerman \cite{ChriKla}, Corvino \cite{Co2000}, Chrusciel and Delay \cite{ChruDe2002,ChruDe2003}, Corvino-Schoen \cite{CoScho2003} and Klainerman-Nicol{\`o} \cite{KlaNi,KlaNi2002,KlaNi2003}) and is now fairly well understood, at least in the flat case. However, it is not yet clear, given an asymptotically flat spacetime, which class of initial data yields solutions that admit the peeling property at a given order, and whether these classes are smaller than in Minkowski spacetime or not.

The works of Mason and Nicolas \cite{MaNi2009,MaNi2012} were precisely aimed at answering this last question for the Schwarzschild metric, for scalar, Dirac and Maxwell fields. Their method combines the Penrose compactification of the spacetime and geometric energy estimates. By working in a neighbourhood of spacelike infinity on the compactified spacetime, one obtains energy estimates at all orders for the rescaled field, which control weighted Sobolev norms on $\scri$ in terms of similar norms on a Cauchy hypersurface and vice versa. The finiteness of the norms up to order $k$ (where $k\in \mathbb{N}$) at $\scri$ defines the peeling of order $k$. By the completion of smooth compactly supported data on the Cauchy hypersurface in the corresponding norms, one obtains the optimal classes of data ensuring that the associated solution peels at order $k$. The result does not strictly refer to the regularity near spacelike infinity $i_0$. Indeed, if the regularity is controlled in a neighbourhood of $i_0$ and on the full initial data hypersurface, it can be extended to the whole of $\scri$ by standard results for hyperbolic equations. The works \cite{MaNi2009, MaNi2012} put further a program of peeling on the asymptotic flat spacetimes. Continuing this program Nicolas and Pham \cite{NiXuan2019, Pha2019} established explicitly the peeling for linear (or semilinear) scalar wave and Dirac equations on Kerr spacetime which is not static and not symmetric spherical such as Schwarzschild spacetime.  

On the other hand, various dynamical constructions of spacetimes appear in physical reality and violate Sachs peeling property of gravitational radiation. These spacetimes do not possess a smooth null infinity $\scri$. In particular, Christodoulou \cite{Chri2002} assumed that the Bondi mass along $\scri^+$ decays with the rate predicted by the quadrupole approximation for a system of $N$
infalling masses coming from past infinity, combined with the assumption that there be no incoming radiation from past null infinity $\scri^-$, then the evolutions of Einstein-Vacuum equations (see \cite{ChriKla} for the detailed expression of these evolutions) does not admit a smooth conformal compactification, i.e., Penrose’s proposal of smooth conformal compactification of spacetime (or
smooth null infinity) fails for this case. Recently, motivated by \cite{Chri2002}, Kehrberger \cite{Ker2021} has constructed counter-examples to smooth null infinity by considering the solution of spherically symmetric Einstein-Scalar field system with positive Hawking mass and the polynomially decaying data on a timelike boundary (or on an ingoing null hypersurface) and no incoming radiation from past null infinity $\scri^-$ (see theorems 2.1, 2.2 and 2.4 in \cite{Ker2021}). Moreover, Kehrberger \cite{Ker2022} has also shown the failure of Sachs peeling in a neighbourhood of $i_0$ by proving the logarithmically modified Price’s law asymptotics near $\scri^+$ for the linear scalar wave equation $\Box_g\phi=0$ on Schwarzschild spacetime. In particular, he has considered conformally smooth initial data on an ingoing null hypersurface and vanishing data on $\scri^-$ for $\Box_g\phi=0$, then he has obtained precise asymptotics of the solution $\phi$ with logarithmic terms (see theorems 6.1 and 7.1 in \cite{Ker2022}). This also shows that the non-smoothness of null infinity near $i_0$ propagates and translates into logarithmic tails near $i^+$ (see also \cite{Ker22} for the early-time asymptotics (logarithmically modified Price’s law) for fixed-frequency solutions $\phi_l$ to the wave equation $\Box_g\phi_l=0$ with no incoming radiation condition on $\scri^-$ and polynomially decaying data $r\phi_l\simeq t^{-1}$ as $t\to -\infty$).  

Comeback to our work, we will explore the method in \cite{MaNi2009,NiXuan2019,Pha2019} to establish the peeling for tensorial Fackerell-Ipser and spin $\pm 1$ Teukolsky equations on Schwarzschild spacetime. 
First, we recall briefly some results about these equations: the spin $\pm 1$ Teukolsky equations are satisfied by the extreme components of Maxwell equation (see \cite{Ba1973,Pa2019}) and the spin $\pm 2$ Teukolsky equations arise from the linear and nonlinear stability problems of black hole spacetimes. We refer the reader to \cite{ABlu2,Da2019, El2020'} for the linear stability and \cite{Da2021, El2021, Kl2018, Kl2021, Kl2022} for the nonlinear stability. Moreover, the spin $\pm 1$ Teukolsky equations have been studied in some recent works by Giorgi \cite{El2019}, Ma \cite{Ma20} and Pasqualotto \cite{Pa2019}. On the other hand, the spin $\pm 2$ Teukolsky equations have been studied by Dafermos et al. \cite{Da2020} and Ma \cite{Ma202020}. In particular, the authors in \cite{Da2020,El2019,Ma20,Ma202020,Pa2019} have used $r^p$-method (see \cite{DaRo2010}) to establish the boundedness of energy and study time decays of the associated solutions of Teukolsky equations on Schwarzschild, Reissner-Nordstr\"om and Kerr spacetimes. The Price's law decays for Teukolsky equations on Kerr spacetime has been studied by Ma and Zhang \cite{Ma2021}, where they have established the global sharp decay for the spin $\pm s$ (where $s=0,1,2$) components, which are solutions to Teukolsky equations, in the black hole exterior and on the event horizon of a slowly rotating Kerr spacetime. On the other hand, the scattering theories for spin $\pm 1$ and $\pm 2$ Teukolsky equations on Schwarzschild spacetime have been studied by Pham \cite{Pham2020} and Masaood \cite{Masao}, respectively.

The tensorial Fackerell-Ipser equations are obtained by commuting the spin $\pm 1$ Teukolsky equations with the projected covariant derivatives $\slashed{\nabla}_L$ and $\slashed{\nabla}_{\underline L}$ on the $2$-sphere $\mathbb{S}^2_{(t,r)}$ at $(t,r)$, where $L$ and $\underline{L}$ are outgoing and incoming principal null directions respectively (see Subsection \ref{Equation}). Both the tensorial Fackerell-Ipser and spin $\pm 1$ Teukolsky equations are expressed under tensorial forms.
We consider these tensorial equations on a neighbourhood of spacelike infinity $\Omega_{u_0}^+$ which is foliated by a family of spacelike hypersurfaces $\left\{ \mathcal{H}_s \right\}_{0\leq s \leq 1}$. Using the stress-energy tensor for the tensorial linear Klein-Gordon equation we obtain the approximate conservation laws for the Fackerell-Ipser and Teukolsky equations and calculate energy fluxes of the associated solutions through $\mathcal{H}_s$ (see Lemma \ref{Energyfluxes}). In order to define the peeling we establish the two-side estimates of the energies through the future null infinity $\scri^+$ and the initial Cauchy hypersurface $\Sigma_0$ inside $\Omega_{u_0}^+$ (see theorems \ref{P01} and \ref{P011}). These estimates are obtained at all order $k$ (where $k\in \mathbb{N}$) of the projected covariant derivatives (see theorems \ref{PeelingFac} and \ref{PeelingTeu}). As a consequence, we can define the peeling at all order $k$ as well as the finiteness of energy norms of the solutions through the spacelike infinity $\scri^+$ inside $\Omega_{u_0}^+$. Moreover, we can also give the optimal initial data endowed Sobolev norm on $\Sigma_0$ which guarantees this definition. 

The paper is organized as follows: Section \ref{GeoEq} presents Schwarzschild spacetime and its Penrose's conformal compactification, the neighbourhood of the spacelike infinity $\Omega_{u_0}^+$ and its foliation, the Maxwell, tensorial Fackerell-Ipser and spin $\pm 1$ Teukolsky equations; Section \ref{Basic} contains the approximate conservation laws and energy fluxes of the associated solutions; Section \ref{Pee} relates the peeling for equations.

{\bf Notation.}\\
$\bullet$ We denote the bundle tangent to each $2$-sphere $\mathbb{S}^2(t,r)$ at $(t,r)$ by $\mathcal{B}$ and the vector space of all smooth sections of $\mathcal{B}$ by $\Gamma(\mathcal{B})$. The space of all $1$-forms on $\mathbb{S}^2_{(t,r)}$ is denoted by $\Lambda^1(\mathcal{B})$.\\ 
$\bullet$ We denote the orthogonal projection of covariant derivative $\nabla$ on $\mathcal{B}$ by $\slashed{\nabla}$.\\
$\bullet$ The space of $1$-forms on the unit $2$-sphere is denoted by $\Lambda^1(\mathbb{S}^2)$. The basic frame of $\Lambda^1(\mathbb{S}^2)$ is denoted by $(\slashed{\nabla}_{\partial_\theta},\slashed{\nabla}_{\partial_\varphi})$.\\
$\bullet$ We denote the space of smooth compactly supported scalar functions on $\mathcal{M}$ (a smooth manifold without boundary) by $\mathcal{C}_0^\infty(\mathcal{M})$. The space of smooth compactly supported $1$-forms in $\Lambda^1(\mathbb{S}^2)$ on $\mathcal{M}$ is denoted by $\mathcal{C}_0^\infty(\Lambda^1(\mathbb{S}^2)|_{\mathcal{M}})$.\\
$\bullet$ Let $f(x)$ and $g(x)$ be two real functions. We write $f \lesssim g$ if there exists a constant $C \in (0,+\infty)$ which is independent of the functions $f,\, g,$ and such that $f(x)\leq C g(x)$ for all parameter $x$. We write $f\simeq g$ if both $f\lesssim g$ and $g\lesssim f$ are valid.\\
{\bf Acknowledgements.} The author would like to thank the referee for his or her careful reading of the manuscript, give useful comments and related references which help us to improve this paper. This work is supported by the Vietnam Institute for Advanced Study in Mathematics (VIASM) 2023.

\section{Geometrical and analytical setting}\label{GeoEq} 

\subsection{Schwarzschild metric and Penrose's conformal compactification}
We consider the region outside the Schwarzschild black-hole $({\cal{M}}=\mathbb{R}_t\times ]2M,+\infty[_r\times \mathbb{S}^2,g)$, equipped with the Lorentzian metric $g$ given by
$$g = F \d t^2 - F^{-1}\d r^2 - r^2\d\mathbb{S}^2, \, F=F(r)=1-\mu,\, \mu=\frac{2M}{r},$$
where $\d \mathbb{S}^2$ is the euclidean metric on the unit $2$-sphere $\mathbb{S}^2$, and $M>0$ is the mass of the black hole.  

We recall that the Regge-Wheeler coordinate $r_*=r+2M \log(r-2M)$ satisfies $\d r=F\d r_*$. In the coordinates $(t,r_*,\theta,\varphi)$, the Schwarzschild metric takes the form
$$g = F(\d t^2- \d r_*^2) - r^2\d\mathbb{S}^2.$$
The retarded and advanced Eddington-Finkelstein coordinates $u$ and $v$ are defined by
$$u=t-r_*, \, v= t+r_*.$$
The outgoing and incoming principal null directions are 
$$L =\partial_v = \partial_t + \partial_{r_*}, \, \underline{L} = \partial_u = \partial_t - \partial_{r_*}$$
respectively.

Putting $\Omega = 1/r$ and $\hat{g} = \Omega^2g$. We obtain a conformal compactification of the exterior domain in the retarded variables $(u, \, R = 1/r, \, \theta,\, \varphi)$ that is $\left(\mathbb{R}_u\times \left[0,\dfrac{1}{2M}\right] \times \mathbb{S}^2, \hat{g} \right)$ with the rescaled metric 
\begin{equation*}
\hat{g} = R^2(1-2MR) \d u^2 - 2\d u\d R - \d\mathbb{S}^2.
\end{equation*}
The future null infinity $\scri^+$ and the past horizon $\mathfrak{H}^-$ are null hypersurfaces of the rescaled spacetime
$$\scri^+ = \mathbb{R}_u \times \left\{ 0\right\}_R \times \mathbb{S}^2, \, \mathfrak{H}^- = \mathbb{R}_u \times \left\{ 1/2M\right\}_R \times \mathbb{S}^2.$$
If we use the advanded variables $(v, \, R=1/r, \, \theta,\, \varphi)$, the rescaled metric $\hat{g}$ takes the form
\begin{equation*}
\hat{g} = R^2(1-2MR)\d v^2 + 2 \d v \d R - \d\mathbb{S}^2.
\end{equation*}
The past null infinity $\scri^-$ and the future horizon $\mathfrak{H}^+$ are described as the null hypersurfaces
$$\scri^- = \mathbb{R}_v \times \left\{ 0\right\}_R \times \mathbb{S}^2, \, \mathfrak{H}^+ = \mathbb{R}_v \times \left\{ 1/2M\right\}_R \times \mathbb{S}^2.$$
Penrose's conformal compactification of $\mathcal{M}$ is
$$\bar{\mathcal{M}} = \mathcal{M} \cup \scri^+ \cup \mathfrak{H}^+ \cup \scri^-\cup \mathfrak{H}^-\cup S_c^2,$$
where $S_c^2$ is the crossing sphere.

Note that the compactified spacetime $\bar{\mathcal{M}}$ is not compact. There are three ''points'' missing to the boundary: $i^+$, or future timelike infinity, defined as the limit point of uniformly timelike curves as $t\to + \infty$,
$i^-$, past timelike infinity, symmetric of $i^+$ in the distant past, and $i_0$, spacelike infinity, the limit point of uniformly spacelike curves as $r\to +\infty$. These ''points'', that can be described as $2$-spheres, are singularities of the rescaled metric $\hat{g}$.

In the retarded coordinates $(u,\, R, \, \theta,\, \varphi)$ we have the following relation
\begin{equation*}
\partial_R = - \frac{r^2}{F}(\partial_t + \partial_{r_*}) = -\frac{r^2}{F}L.
\end{equation*}
In the advanced coordinates $(v,\, R, \, \theta,\, \varphi)$ we have the following relation
\begin{equation*}
\partial_R = - \frac{r^2}{F}(\partial_t - \partial_{r_*}) = -\frac{r^2}{F}\underline{L}.
\end{equation*}

The scalar curvature of the rescaled metric $\hat g$ is
$$\mathrm{Scal}_{\hat g}= 12MR.$$

The volume form associated with the rescaled metric $\hat{g}$ are
$$\mathrm{dVol}_{\hat g} = -\d u \wedge \d R \wedge \d^2\omega = - \d v \wedge \d R \wedge \d^2 \omega,$$
where $\d^2\omega$ is the euclidean area element on unit $2$-sphere $\mathbb{S}^2$.

\subsection{Neighbourhood of spacelike infinity}\label{S22} 
By symmetry, we will study the peeling properties of Fackerell-Ipser and spin $\pm 1$ Teukolsky equations in domain $\left\{ t\geq 0 \right\}$ of $\bar{\mathcal{M}}$, the peeling in domain $\left\{ t \leq 0\right\}$ is done similarly. Following \cite{MaNi2009,MaNi2012}, we work on a future neighbourhood $\Omega_{u_0}^+ \, (\hbox{where   }u_0 \ll -1)$, of spacelike infinity $i_0$ that is sufficiently far away from the black hole and singularities. The neighbourhood $\Omega_{u_0}^+$ is bounded by a part of the Cauchy hypersurface $\Sigma_0 = \left\{ t = 0 \right\}$, a part of future null infinity $\scri^+$ and the following null hypersurface 
$$ \mathcal{S}_{u_0} = \left\{ u = u_0, \, t\geq 0 \right\} \text{ for } {u_0} \ll -1. $$
The neighbourhood $\Omega_{u_0}^+$ can be given precisely as 
$$ \Omega^+_{u_0} = {\cal I}^- ( \mathcal{S}_{u_0} ) \cap \{ t \geq 0\} $$
in the compactification domain $\bar{\mathcal{M}}$. We foliate $\Omega_{u_0}^+$ (the gray domain in Figure 1) by the following spacelike hypersurfaces
$${\cal H}_s \, = \, \left\{u = -sr_*; \, u \leq u_0 \right\}, \,\,\,\, 0 \leq s \leq 1, \text{ for a given } u_0\ll -1.$$
We explain the boundaries of $\Omega_{u_0}^+$ as follows (see Figure 1):
\begin{itemize}
\item If $s=1$, we have the first boundary ${\cal H}_1 = \left\{ t=0,\, u \leq u_0\right\}$, which is a part of $\Sigma_0$ inside $\Omega_{u_0}^+$.
\item If $s=0$, we have that the second boundary ${\cal H}_0$ is the limit of ${\cal H}_s$, when $s$ tends to zero. If $u$ is fixed and $s$ tends to $0$, then $r_*$ tends to $+\infty$ and $R$ tends to $0$. Therefore, we obtain that ${\cal H}_0  = \left\{ R=0,\, u \leq u_0\right\}$, which is a part of $\scri^+$ inside $\Omega_{u_0}^+$. We also denote ${\cal H}_0$ by $\scri^+_{u_0}$.
\item The third boundary is null hypersurface $ \mathcal{S}_{u_0} = \left\{ u = u_0, \, t\geq 0 \right\}$. Given $0 \leq s_1<s_2 \leq 1$, we denote by $\mathcal{S}_{u_0}^{s_1,s_2}$ the part of $\mathcal{S}_{u_0}$ between $\mathcal{H}_{s_1}$ and $\mathcal{H}_{s_2}$.
\end{itemize}
\begin{figure}[H]
\begin{center}
\includegraphics[scale=0.5]{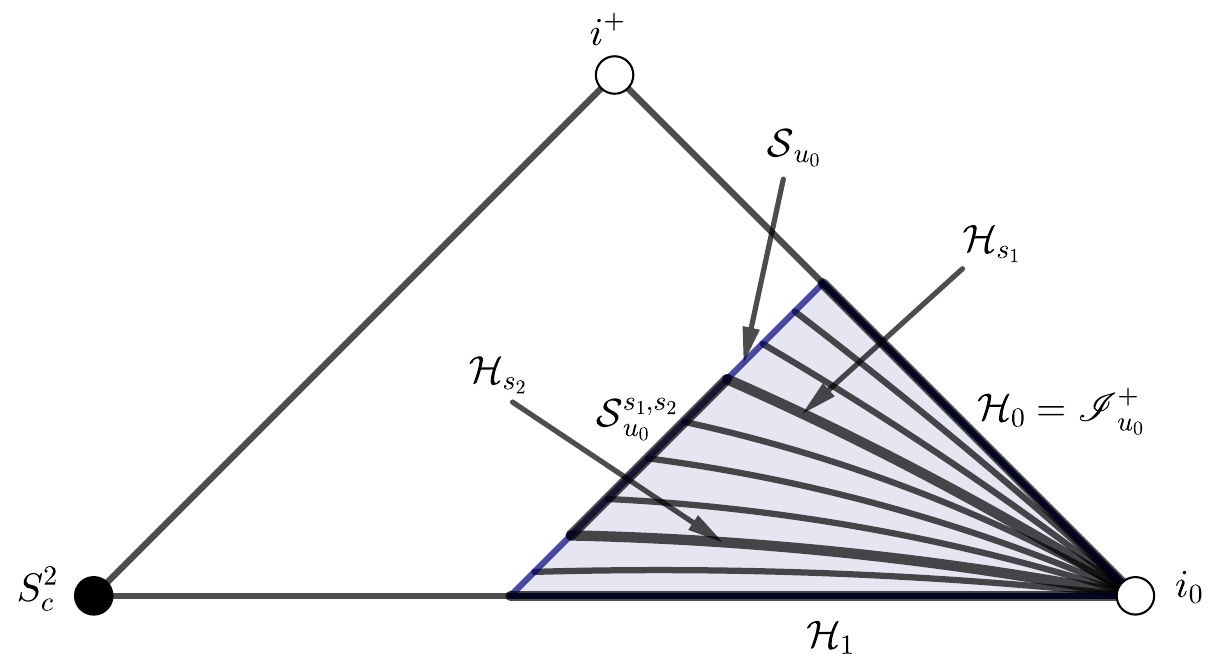}
\caption{Foliation $\left\{\mathcal{H}_s\right\}_{0\leq s\leq 1}$ of neighbourhood $\Omega_{u_0}^+$.}
\end{center}
\end{figure}
With the foliation $\left\{ {\cal H}_s \right\}_{0\leq s\leq 1}$, we choose an identifying vector field $\nu$ that satisfies $\nu(s)=1$ as follows
$$ \nu = r_*^2R^2 (1-2MR)|u|^{-1}\partial_R. $$
The $4-$volume measure $\mathrm{dVol}_{\hat g}$ can be decomposed  into the product of $\d s$ along the integral lines of $\nu^a$ and the $3$-volume measure
$$ \nu \hook \mathrm{dVol}_{\hat g} |_{{\cal H}_s}= -r_*^2 R^2(1-2MR)|u|^{-1} \d u \d^2\omega|_{{\cal H}_s}$$
on each slice ${\cal H}_s$.

We need the following lemma to establish simpler equivalent expressions in the next sections (see Lemma 2.1 in \cite{MaNi2012}):
\begin{lem}\label{epsilonestimates}
Let $\varepsilon >0$, then for $u_0\ll -1$, $|u_0|$ large enough, in $\Omega^+_{u_0}$, we have
\begin{equation*} 
r < r_* < (1+\varepsilon)r, \,  0 < |u|R < 1+ \varepsilon,\, 1 < r_*R < 1+\varepsilon \,\,\,  \mbox{and} \,\,\, 1-\varepsilon< 1- 2MR <1.
\end{equation*}
The factor $r_*^2 R^2(1-2MR)|u|^{-1}$ appearing in the $3$-volume measure $\nu \hook \mathrm{dVol} |_{{\cal H}_s}$ satisfies that
$$\frac{1+\varepsilon}{|u|} < r_*^2 R^2(1-2MR)|u|^{-1} < \frac{(1+\varepsilon)^2}{|u|}.$$
\end{lem}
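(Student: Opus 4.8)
The plan is to reduce all four pointwise estimates, and hence the bound on the volume factor, to the single fact that on $\Omega_{u_0}^+$ the area radius $r$ is uniformly large, with a lower bound tending to $+\infty$ as $|u_0|\to+\infty$. To see this, recall that every point of $\Omega_{u_0}^+$ satisfies $u\leq u_0$ and $t\geq 0$, so that $r_* = t-u \geq -u = |u| \geq |u_0|$. Since $\d r_*/\d r = r/(r-2M)>0$ for $r>2M$, the map $r\mapsto r_*$ is a strictly increasing diffeomorphism of $]2M,+\infty[$ onto $\R$; consequently $r_*\geq |u_0|$ forces $r\geq r_0(|u_0|)$, where $r_0(|u_0|)\to+\infty$ as $|u_0|\to+\infty$. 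Thus, choosing $|u_0|$ large enough, I may assume $r$ (equivalently $r_*$) is as large as needed throughout $\Omega_{u_0}^+$.

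With this in hand I would establish the four inequalities in turn. From $r_* = r + 2M\log(r-2M)$ one has $r_*/r = 1 + 2M\log(r-2M)/r$; the logarithmic term is positive once $r-2M>1$ and tends to $0$ as $r\to+\infty$, so for $|u_0|$ large it lies in $(0,\varepsilon)$, giving both $r<r_*<(1+\varepsilon)r$ and, after multiplying by $R=1/r$, the equivalent statement $1<r_*R<1+\varepsilon$. The bound $1-\varepsilon<1-2MR<1$ is immediate: $R>0$ gives the upper bound, while $2MR = 2M/r<\varepsilon$ for $r$ large gives the lower one (and $r>2M$ guarantees positivity). Finally, for $0<|u|R<1+\varepsilon$ I use $|u| = r_*-t\leq r_*$ (as $t\geq 0$), whence $|u|R\leq r_*R<1+\varepsilon$, while $|u|\geq|u_0|>0$ and $R>0$ give the strict lower bound $0$.

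It then remains to assemble the estimate on the volume factor by writing $r_*^2R^2(1-2MR)|u|^{-1} = (r_*R)^2(1-2MR)|u|^{-1}$ and inserting the bounds just proved. The upper bound $(r_*R)^2<(1+\varepsilon)^2$ together with $1-2MR<1$ yields the claimed $(1+\varepsilon)^2/|u|$, and the lower bounds $(r_*R)^2>1$ and $1-2MR>1-\varepsilon$ yield a lower bound of the form $(1-\varepsilon)/|u|$, which gives the stated inequality after relabelling $\varepsilon$. The only genuine point requiring care---the main obstacle---is the first paragraph: one must verify that the geometry of $\Omega_{u_0}^+$ really does confine $r$ to a region $\{r\geq r_0\}$ with $r_0\to\infty$, i.e.\ that the logarithmic correction in $r_*$ can be made negligible uniformly on the whole neighbourhood by pushing $u_0$ to $-\infty$. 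Once this uniform largeness of $r$ is secured, every remaining step is an elementary manipulation of the relations $R=1/r$ and $|u|=r_*-t$.
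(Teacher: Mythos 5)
Your overall strategy is sound and is in fact the standard one behind this result: the paper itself gives no proof of the lemma, delegating it to Lemma 2.1 of \cite{MaNi2012}, and the argument there is exactly your reduction --- in $\Omega_{u_0}^+$ one has $r_* = t-u \geq -u = |u| \geq |u_0|$ (since $t\geq 0$, $u\leq u_0$), so $r$ and $r_*$ are uniformly large once $|u_0|$ is, and all four pointwise bounds then follow from $r_*/r = 1 + 2M\log(r-2M)/r \to 1^+$ and $2MR\to 0$, with $|u|R\leq r_*R$ handling the second inequality. That part of your write-up is complete and correct.

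The genuine problem is your last step. You correctly derive the lower bound $(1-\varepsilon)/|u|$ for the volume factor, but then claim the stated bound $(1+\varepsilon)/|u|$ follows ``after relabelling $\varepsilon$''. It does not, and it cannot: the quantity $(r_*R)^2(1-2MR)$ tends to $1$ as $r\to+\infty$, and points with $r$ arbitrarily large lie in $\Omega_{u_0}^+$ for every fixed $u_0$ (for instance along $\mathcal{H}_1$), so for no choice of $u_0$ is this quantity uniformly greater than $1+\varepsilon$. Since $1-\varepsilon' < 1 < 1+\varepsilon$ for all positive $\varepsilon,\varepsilon'$, no renaming of the parameter converts your bound into the stated one. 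The lower bound printed in the lemma is a misprint: the inequality that is actually true --- and the one appearing in the cited Lemma 2.1 of \cite{MaNi2012} --- has $1-\varepsilon$ in the numerator, which is precisely what your argument establishes (indeed, your expansion $(r_*R)^2(1-2MR) = 1 + \frac{2M}{r}\left(2\log(r-2M)-1\right) + O(\log^2 r/r^2)$ shows the product even exceeds $1$ for $r$ large, so $1/|u|$ would also serve). So you have a correct proof of the corrected statement; the right move at the end is to flag the misprint, not to assert that a relabelling closes the gap.
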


\subsection{Maxwell, spin $\pm 1$ Teukolsky and tensorial Fackerell-Ipser equations}\label{Equation}
Let $\mathcal{F}$ be an antisymmetric $2$-form on the exterior domain of Schwarzschild black hole $\mathcal{M}$. The Maxwell equations take the form
\begin{equation*}
\d \mathcal{F} = 0, \,\,\, \d *\mathcal{F} = 0,
\end{equation*}
where $*$ denotes the Hodge dual operator of $2$-form, i.e,
$$(*\mathcal{F})_{\mu\nu} = \frac{1}{2}e_{\mu\nu\gamma\delta}\mathcal{F}^{\gamma\delta}.$$
The system can be reformulated as follows
\begin{equation*}
\nabla_{[\mu}\mathcal{F}_{\kappa\lambda]}=0, \,\,\, \nabla^\mu \mathcal{F}_{\mu\nu}=0,
\end{equation*}
where the square brackets denote the antisymmetrization of indices.

The Maxwell field $\mathcal{F}$ can be decomposed into $1$-forms $\alpha_a, \, \underline{\alpha}_a \in \Lambda^1(\mathcal{B})$ and $\rho,\, \sigma \in C^{\infty}(\mathcal{M})$ which are defined as follows
\begin{gather*}
\alpha(V):= \mathcal{F}(V,L), \, \underline{\alpha}(V):= \mathcal{F}(V,\underline{L}) \hbox{  for all  } V\in \Gamma(\mathcal{B}),\\
\rho:= \frac{1}{2} \left( 1 - \frac{2M}{r} \right)^{-1} \mathcal{F}(\underline{L},L),\, \sigma:= \frac{1}{2}e^{cd} \mathcal{F}_{cd},
\end{gather*}
where $e_{cd}\in \Lambda^2(\mathcal{B})$ is the volume form of $2$-sphere $\mathbb{S}^2_{(t,r)}$ at $(t,r)$.

Let $\mathcal{F}$ be in $\Lambda^2(\mathcal{M})$ such that $\mathcal{F}$ satisfies the Maxwell equation on $\mathcal{M}$. Then, we have the following formulas (see \cite[Proposition 3.6]{Pa2019}):
\begin{equation*}
\frac{1}{r}\slashed{\nabla}_L(r\underline{\alpha}_a) = - (1-\mu)(\slashed{\nabla}_a\rho - e_{ab}\slashed{\nabla}^b\sigma) 
\end{equation*} 
and
\begin{equation*}
\frac{1}{r}\slashed{\nabla}_{\underline L}(r\alpha_a) = (1-\mu)(\slashed{\nabla}_a\rho + e_{ab}\slashed{\nabla}^b\sigma).
\end{equation*}
From this, we can define the $1$-forms in $\Lambda^1(\mathcal{B})$:
\begin{equation}\label{Tran}
\phi_a := \frac{r^2}{F}\slashed{\nabla}_{\underline L}(r\alpha_a), \,\,\, \underline{\phi}_a := \frac{r^2}{F}\slashed{\nabla}_{L}(r\underline{\alpha}_a).
\end{equation}
Moreover, the extreme components $\alpha_a$ and $\underline{\alpha}_a$ satisfy the spin $\pm 1$ Teukolsky equations respectively (see original proof in \cite{Ba1973} and recent \cite[Proposition 3.6]{Pa2019}):
\begin{equation}\label{Teu1}
\slashed{\nabla}_L\slashed{\nabla}_{\underline L}(r\alpha_a) + \frac{2}{r}\left( 1-\frac{3M}{r} \right)\slashed{\nabla}_{\underline L}(r\alpha_a) - F\slashed\Delta(r\alpha_a) + \frac{F}{r^2}r\alpha_a=0,
\end{equation}
\begin{equation}\label{Teu2}
\slashed{\nabla}_L\slashed{\nabla}_{\underline L}(r\underline{\alpha}_a) - \frac{2}{r}\left( 1-\frac{3M}{r} \right)\slashed{\nabla}_{L}(r\underline{\alpha}_a) - F\slashed\Delta(r\underline{\alpha}_a) + \frac{F}{r^2}r\underline{\alpha}_a=0,
\end{equation}
where $F=1-2MR$, $\slashed \nabla$ and $\slashed\Delta = \frac{1}{r^2}\Delta_{\mathbb{S}^2}$ are the orthogonal projection of covariant derivative $\nabla$ and covariant laplacian operator on the bundle tangent  $\mathcal{B}$ of $2$-sphere $\mathbb{S}^2_{(t,r)}$ respectively.

The tensorial Fackerell-Ipser equations are established from the spin $\pm 1$ Teukolsky equations by the following proposition (see \cite[Proposition 3.7]{Pa2019}):
\begin{proposition}\label{relationEq}
Suppose that $(\alpha_a,\underline{\alpha}_a,\rho,\sigma)$ satisfy the Maxwell equation, then the $1$-forms $\phi_a$ and $\underline{\phi}_a$ satisfy the following tensorial Fackerell-Ipser equations
\begin{equation}\label{Fac01}
\slashed{\Box}_g(\phi_a) + \frac{1}{r^2}\phi_a=0,
\end{equation}
\begin{equation}\label{Fac02}
\slashed{\Box}_g(\underline{\phi}_a) + \frac{1}{r^2}\underline{\phi}_a=0,
\end{equation}
where we denote the tensorial wave operator by
$$\slashed{\Box}_g = \frac{1}{F}\slashed{\nabla}_L\slashed{\nabla}_{\underline L} - \slashed{\Delta}.$$
\end{proposition}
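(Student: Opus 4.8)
It suffices to establish \eqref{Fac01}; equation \eqref{Fac02} then follows by the symmetric argument, interchanging the roles of $L$ and $\underline{L}$ (equivalently the retarded and advanced variables $u$ and $v$) and using the Teukolsky equation \eqref{Teu2} in place of \eqref{Teu1}. Throughout I abbreviate $\psi_a := r\alpha_a$ and $\chi_a := \slashed{\nabla}_{\underline{L}}\psi_a$, so that by \eqref{Tran} the field in question is $\phi_a = \frac{r^2}{F}\chi_a$, while \eqref{Teu1} asserts that $\mathcal{T}\psi_a := \slashed{\nabla}_L\slashed{\nabla}_{\underline{L}}\psi_a + \frac{2}{r}\left( 1 - \frac{3M}{r} \right)\chi_a - F\slashed{\Delta}\psi_a + \frac{F}{r^2}\psi_a$ vanishes. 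The guiding idea is that the weighted operator $\frac{r^2}{F}\slashed{\nabla}_{\underline{L}}$ \emph{intertwines} the Teukolsky operator $\mathcal{T}$ with the Fackerell--Ipser operator $\slashed{\Box}_g + r^{-2}$; concretely, I would substitute $\phi_a = \frac{r^2}{F}\chi_a$ into $\slashed{\Box}_g\phi_a + \frac{1}{r^2}\phi_a = \frac{1}{F}\slashed{\nabla}_L\slashed{\nabla}_{\underline{L}}\phi_a - \slashed{\Delta}\phi_a + \frac{1}{r^2}\phi_a$, expand by Leibniz, and reduce the resulting third-order expression in $\psi_a$ back to a multiple of $\slashed{\nabla}_{\underline{L}}\mathcal{T}\psi_a$, which vanishes.

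The computation requires three explicit inputs drawn from the Schwarzschild geometry. First, from $L = \partial_t + \partial_{r_*}$, $\underline{L} = \partial_t - \partial_{r_*}$ and $\d r = F\,\d r_*$ one has $L(r) = F$ and $\underline{L}(r) = -F$, whence the derivatives of the conformal weight,
\[
\underline{L}\!\left( \frac{r^2}{F} \right) = -\frac{2r}{F}\left( 1 - \frac{3M}{r} \right), \qquad L\!\left( \frac{r^2}{F} \right) = \frac{2r}{F}\left( 1 - \frac{3M}{r} \right);
\]
the appearance here of the factor $\frac{2r}{F}\left( 1 - \frac{3M}{r} \right)$ is exactly what reproduces the first-order coefficient of \eqref{Teu1} and is the structural reason the transformation succeeds. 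Second, I need the commutator $[\slashed{\nabla}_L, \slashed{\nabla}_{\underline{L}}]$ on sections of $\mathcal{B}$; since $[L, \underline{L}] = 0$ as vector fields, this reduces to the curvature of the induced connection in the $L \wedge \underline{L}$ plane, which for the round spheres of Schwarzschild is explicit. Third, and most delicately, I need $[\slashed{\nabla}_{\underline{L}}, \slashed{\Delta}]$ acting on bundle $1$-forms: writing $\slashed{\Delta} = r^{-2}\Delta_{\mathbb{S}^2}$ and using $\underline{L}(r^{-2}) = 2Fr^{-3}$, the $\underline{L}$-derivative hits the weight $r^{-2}$ to produce a term $\frac{2F}{r}\slashed{\Delta}$, together with a contribution $r^{-2}[\slashed{\nabla}_{\underline{L}}, \Delta_{\mathbb{S}^2}]$ coming from the interplay of the radial derivative with the bundle connection in the angular directions.

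With these in hand, I would expand $\slashed{\Box}_g\phi_a + r^{-2}\phi_a$ and observe that its second-order part assembles into $\frac{r^2}{F}\slashed{\Box}_g\chi_a$; applying $\slashed{\nabla}_{\underline{L}}$ to \eqref{Teu1} — which brings in precisely the two commutators above, since $\slashed{\nabla}_{\underline{L}}\slashed{\nabla}_L\chi_a = \slashed{\nabla}_L\slashed{\nabla}_{\underline{L}}\chi_a + [\slashed{\nabla}_{\underline{L}}, \slashed{\nabla}_L]\chi_a$ and $\slashed{\nabla}_{\underline{L}}\slashed{\Delta}\psi_a = \slashed{\Delta}\chi_a + [\slashed{\nabla}_{\underline{L}}, \slashed{\Delta}]\psi_a$ — rewrites $\slashed{\Box}_g\chi_a$ in terms of $\slashed{\nabla}_{\underline{L}}\mathcal{T}\psi_a$ and lower-order quantities. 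It then remains to verify that the first-order terms in $\chi_a$ cancel against the weight derivatives of the second paragraph, and that the zeroth-order terms, fed by the $\slashed{\Delta}$-commutator and the mass-dependent pieces of $F$, combine to leave exactly $+\frac{1}{r^2}\phi_a$. Invoking $\mathcal{T}\psi_a = 0$ then gives \eqref{Fac01}.

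I expect the main obstacle to be the commutator $[\slashed{\nabla}_{\underline{L}}, \slashed{\Delta}]$ on $1$-forms. Unlike the scalar case, where $\partial_{r_*}$ commutes with the unit-sphere Laplacian and only the $r^{-2}$ weight contributes, the projected derivative $\slashed{\nabla}_{\underline{L}}$ of a section of $\mathcal{B}$ carries connection (second fundamental form) terms, so that $[\slashed{\nabla}_{\underline{L}}, \Delta_{\mathbb{S}^2}]$ need not vanish; one must track this curvature contribution carefully and verify that it is precisely what converts the bare term $F\slashed{\Delta}\psi_a$ of Teukolsky into the $\slashed{\Delta}\phi_a$ of Fackerell--Ipser while simultaneously generating the correct potential. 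Once this cancellation is confirmed the intertwining identity, and hence the proposition, follows.
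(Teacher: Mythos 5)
First, a point of reference: the paper never proves this proposition. Both the Teukolsky equations \eqref{Teu1}--\eqref{Teu2} and the present statement are imported from \cite{Pa2019} (Propositions 3.6 and 3.7 there), so your attempt can only be compared with that reference, which has the whole first-order Maxwell system at its disposal (in particular, by the transport equation quoted in Subsection \ref{Equation}, one has $\phi_a = r^3(\slashed{\nabla}_a\rho + e_{ab}\slashed{\nabla}^b\sigma)$). Your argument is organised differently and, since the paper has already recorded that Maxwell implies \eqref{Teu1}--\eqref{Teu2}, it is a legitimate self-contained route: it consumes nothing but \eqref{Teu1} and the definition \eqref{Tran}, and therefore proves the stronger assertion that $\frac{r^2}{F}\slashed{\nabla}_{\underline L}$ maps \emph{every} solution of \eqref{Teu1} to a solution of \eqref{Fac01}, whether or not it arises from a Maxwell field. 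This intertwining is precisely the ``physical-space Chandrasekhar transformation'' around which \cite{Pa2019} is built, so in spirit your proof matches the cited source while being leaner in its hypotheses.

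Second, the obstacle you flag --- a possible curvature or second-fundamental-form contribution to $[\slashed{\nabla}_{\underline L},\slashed{\Delta}]$ on $1$-forms --- turns out to be empty, and your proof is complete once this is checked. In the coordinates $(t,r_*,\theta,\varphi)$, with $A,B$ ranging over $\{\theta,\varphi\}$, the only Christoffel symbols entering $\slashed{\nabla}_L$, $\slashed{\nabla}_{\underline L}$ on sphere-tangent $1$-forms are $\Gamma^B_{rA}=\frac{1}{r}\delta^B_A$, so on angular components
\[
\slashed{\nabla}_L\xi_A = L(\xi_A) - \tfrac{F}{r}\,\xi_A, \qquad \slashed{\nabla}_{\underline L}\xi_A = \underline{L}(\xi_A) + \tfrac{F}{r}\,\xi_A .
\]
Since $[L,\underline L]=0$ and $F/r$ is static, $[\slashed{\nabla}_L,\slashed{\nabla}_{\underline L}]\xi_A = (L+\underline{L})(\tfrac{F}{r})\,\xi_A = 2\partial_t(\tfrac{F}{r})\,\xi_A = 0$; and because both $\underline L$ and multiplication by $F/r$ commute with $\Delta_{\mathbb{S}^2}$ (a purely angular operator with $t,r$-independent coefficients), $[\slashed{\nabla}_{\underline L},\slashed{\Delta}] = \underline{L}(r^{-2})\,\Delta_{\mathbb{S}^2} = \frac{2F}{r}\slashed{\Delta}$: the $1$-form case behaves exactly like the scalar case, with no extra terms. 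Granting these two facts, your Leibniz expansion closes, although the bookkeeping differs slightly from your guess. Setting $w=\frac{r^2}{F}$ and $P=\frac{2}{r}(1-\frac{3M}{r})$, the $\slashed{\Delta}(r\alpha_a)$ terms cancel among themselves through the scalar identity $-PF+\underline{L}(F)+\frac{2F^2}{r}=0$, the $r\alpha_a$ terms through $\frac{PF}{r^2}-\underline{L}(\frac{F}{r^2})=0$, the potential $+\frac{1}{r^2}\phi_a$ is cancelled by the term $-\frac{F}{r^2}\slashed{\nabla}_{\underline L}(r\alpha_a)$ that Leibniz extracts from the Teukolsky potential (not by the $\slashed{\Delta}$-commutator), and the remaining first-order terms vanish because $(L+\underline{L})P = 2\partial_tP = 0$. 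You can in fact bypass the long expansion altogether: since the null derivatives commute, $L(w)=wP$ and \eqref{Teu1} give in one line the transport identity $\slashed{\nabla}_L\phi_a = (\Delta_{\mathbb{S}^2}-1)(r\alpha_a)$; applying $\frac{1}{F}\slashed{\nabla}_{\underline L}$ and the commutator facts then yields $\frac{1}{F}\slashed{\nabla}_L\slashed{\nabla}_{\underline L}\phi_a = \slashed{\Delta}\phi_a - \frac{1}{r^2}\phi_a$, which is \eqref{Fac01}; the mirror identity $\slashed{\nabla}_{\underline L}\underline{\phi}_a = (\Delta_{\mathbb{S}^2}-1)(r\underline{\alpha}_a)$ gives \eqref{Fac02} the same way, confirming your symmetry claim.
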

\begin{remark}
The tensorial Fackerell-Ipser operator has the same form as the rescaled tensorial wave operator by multiplying the factor $r^2$ due to
\begin{subnumcases}{\slashed{\Box}_{\hat g} = \frac{r^2}{F}\slashed{\nabla}_L\slashed{\nabla}_{\underline{L}} - \slashed{\Delta}_{\mathbb{S}^2} =}
-2\slashed{\nabla}_v\slashed{\nabla}_R - \slashed{\nabla}_R R^2(1-2MR)\slashed{\nabla}_R - \slashed{\Delta}_{\mathbb{S}^2} & $\hbox{in  } (v,\, R,\, \theta,\, \varphi),$ \cr
-2\slashed{\nabla}_u\slashed{\nabla}_R - \slashed{\nabla}_R R^2(1-2MR)\slashed{\nabla}_R - \slashed{\Delta}_{\mathbb{S}^2} & $\hbox{in  } (u,\, R,\, \theta,\, \varphi).$ \nonumber  
\end{subnumcases}
\end{remark}

In the advanced coordinates $(v,\,R,\,\theta,\, \varphi)$ the tensorial Fackerell-Ipser and spin $+1$ Teukolsky equations \eqref{Fac01} and \eqref{Teu1} have the following forms
\begin{equation}\label{ReFac01}
-2\slashed{\nabla}_v\slashed{\nabla}_R\phi_a - \slashed{\nabla}_R R^2(1-2MR)\slashed{\nabla}_R\phi_a - \slashed{\Delta}_{\mathbb{S}^2}\phi_a + \phi_a = 0
\end{equation}
and
\begin{equation}\label{ReTeu1}
-2\slashed{\nabla}_v\slashed{\nabla}_R\widehat{\alpha}_a - \slashed{\nabla}_R R^2(1-2MR)\slashed{\nabla}_R\widehat{\alpha}_a - \slashed{\Delta}_{\mathbb{S}^2}\widehat{\alpha}_a - 2R(1-3MR)\slashed{\nabla}_R\widehat{\alpha}_a + \widehat{\alpha}_a = 0, \hbox{   } \widehat{\alpha}_a = r\alpha_a
\end{equation}
respectively.

In the retarded coordinates $(u,\,R,\,\theta,\,\varphi)$ the tensorial Fackerell-Ipser and spin $-1$ Teukolsky equations \eqref{Fac02} and \eqref{Teu2} have the following forms
\begin{equation}\label{ReFac02}
-2\slashed{\nabla}_u\slashed{\nabla}_R\underline{\phi}_a - \slashed{\nabla}_R R^2(1-2MR)\slashed{\nabla}_R\underline{\phi}_a - \slashed{\Delta}_{\mathbb{S}^2}\underline{\phi}_a + \underline{\phi}_a = 0
\end{equation}
and
\begin{equation}\label{ReTeu2}
-2\slashed{\nabla}_u\slashed{\nabla}_R\widehat{\underline{\alpha}}_a - \slashed{\nabla}_R R^2(1-2MR)\slashed{\nabla}_R\widehat{\underline{\alpha}}_a - \slashed{\Delta}_{\mathbb{S}^2}\widehat{\underline{\alpha}}_a + 2R(1-3MR)\slashed{\nabla}_R\widehat{\underline{\alpha}}_a + \widehat{\underline{\alpha}}_a = 0, \hbox{  } \widehat{\underline\alpha}_a = r\underline{\alpha}_a
\end{equation}
respectively.

In the rest of this paper, we will establish the peeling for the tensorial Fackerell-Ipser and spin $-1$ Teukolsky equations \eqref{ReFac02} and \eqref{ReTeu2} in the neighbourhood $\Omega_{u_0}^+$, i.e., the asymptotic behaviours of the associated solutions along outgoing radial geodesics in coordinates $(u,\, R,\, \theta,\, \varphi)$. The constructions for the equations \eqref{ReFac01} and \eqref{ReTeu1} are done similarly in coordinates $(v,\, R,\, \theta,\, \varphi)$.

\section{Basic formulae}\label{Basic}
\subsection{Approximate conservation laws}
For a $1$-form $\xi_a \in \Lambda^1(\mathcal{B})$ on the $2$-sphere $\mathbb{S}^2_{(t,r)}$ we define
$$\xi^a = \xi_b g_{\mathbb{S}^2}^{ab}, \hbox{    } \slashed{\nabla}_{\underline L}\xi^a = \slashed{\nabla}_{\underline L}\xi_b g_{\mathbb{S}^2}^{ab} \hbox{  and  } \slashed{\nabla}_L \xi^a = \slashed{\nabla}_L \xi_b g_{\mathbb{S}^2}^{ab}.$$
Putting
$$|\xi_a|^2 := \xi_a \xi^a , \, |\slashed{\nabla}_u\xi_a|^2 := \slashed{\nabla}_u{\xi}_a \slashed{\nabla}_u{\xi}^a  \hbox{  and  } |\slashed{\nabla}_R\xi_a|^2 := \slashed{\nabla}_R\xi_a \slashed{\nabla}_R\xi^a,$$
where the scalar product depends on the metric $g_{\mathbb{S}^2}$.

We define the stress-energy tensor for the tensorial linear Klein-Gordon equation $\slashed{\Box}_{\hat g} \xi_a + \xi_a = 0$ as follows
\begin{equation}
\mathbb{T}_{cd}(\xi_a) = \mathbb{T}_{(cd)} (\xi_a) = \slashed{\nabla}_c \xi_a \slashed{\nabla}_d \xi^a - \frac{1}{2}\left< \slashed{\nabla}\xi_a, \slashed{\nabla}\xi^a\right >_{\hat g} \hat{g}_{cd} + \frac{1}{2}|\xi_a|^2\hat{g}_{cd}.
\end{equation}

In order to obtain the approximate conservation laws for \eqref{ReFac02} and \eqref{ReTeu2} we use the Morawetz vector field
$$T^c\partial_c = u^2\partial_u -2(1+uR)\partial_R.$$
By using Lie derivative of the rescaled Schwarzschild metric $\hat{g}$ follows the Morawetz vector field $T$, we can derive that (see \cite{MaNi2009,NiXuan2019}):
\begin{equation}
\slashed{\nabla}_cT_d = \slashed{\nabla}_{(c}T_{d)} = 4MR^2(3+uR)\d u^2.
\end{equation}

For a solution $\underline{\phi}_a$ of the tensorial Fackerell-Ipser equation \eqref{ReFac02}, we have
\begin{equation}\label{Non1}
T^d\slashed{\nabla}^c \mathbb{T}_{cd}(\underline{\phi}_a) = \left( \slashed{\Box}_{\hat g} \underline{\phi}_a + \underline{\phi}_a\right)\slashed{\nabla}_T\underline{\phi}^a=0.
\end{equation}
For a solution $\widehat{\underline\alpha}_a$ of the Teukolsky equation \eqref{ReTeu2}, we have
\begin{equation}\label{Non2}
T^d\slashed{\nabla}^c \mathbb{T}_{cd}(\widehat{\underline{\alpha}}_a) = \left( \slashed{\Box}_{\hat g} \widehat{\underline{\alpha}}_a + \widehat{\underline{\alpha}}_a\right)\slashed{\nabla}_T\widehat{\underline{\alpha}}^a = -2R(1-3MR)\slashed{\nabla}_R\widehat{\underline\alpha}_a\slashed{\nabla}_T \widehat{\underline{\alpha}}^a.
\end{equation}

Setting
\begin{eqnarray}
J_c(\underline{\phi}_a) := T^d \mathbb{T}_{cd}(\underline{\phi}_a) \hbox{  and  } J_c(\widehat{\underline{\alpha}}_a):= T^d \mathbb{T}_{cd}(\widehat{\underline{\alpha}}_a).
\end{eqnarray}
From \eqref{Non1} and \eqref{Non2} the nonlinear energy currents $J_c(\underline{\phi}_a)$ and $J_c(\widehat{\underline{\alpha}}_a)$ satisfy the following approximate conservation laws
\begin{equation}\label{zero1}
\slashed{\nabla}^cJ_c(\underline{\phi}_a) = (\slashed{\nabla}_cT_d) \mathbb{T}^{cd}(\underline{\phi}_a) = 4MR^2(3+uR)|\slashed{\nabla}_R\underline{\phi}_a|^2.
\end{equation}
and
\begin{eqnarray}\label{zero2}
\slashed{\nabla}^cJ_c(\widehat{\underline{\alpha}}_a) &=&  -2R(1-3MR)\slashed{\nabla}_R\widehat{\underline\alpha}_a\slashed{\nabla}_T \widehat{\underline{\alpha}}^a + (\slashed{\nabla}_cT_d) \mathbb{T}^{cd}(\widehat{\underline{\alpha}}_a)\cr
&=& -2R(1-3MR)\slashed{\nabla}_R\widehat{\underline\alpha}_a \left( u^2\slashed{\nabla}_u \widehat{\underline{\alpha}}^a -2(1+uR)\slashed{\nabla}_R\widehat{\underline{\alpha}}^a \right) \cr
&&+ 4MR^2(3+uR)|\slashed{\nabla}_R\widehat{\underline{\alpha}}_a|^2
\end{eqnarray}
respectively.

Because of the symmetries of Schwarzschild spacetime, we have for any $k\in \mathbb{N}$:
\begin{equation*}\label{Higher1}
\slashed{\Box}_{\hat g}\slashed{\nabla}^k_u\underline{\phi}_a + \slashed{\nabla}^k_u\underline{\phi}_a = 0,\, \slashed{\Box}_{\hat g}\slashed{\nabla}^k_u\widehat{\underline{\alpha}}_a + 2R(1-3MR)\slashed{\nabla}_R\slashed{\nabla}_u^k\widehat{\underline{\alpha}}_a + \slashed{\nabla}^k_u\widehat{\underline{\alpha}}_a = 0
\end{equation*}
and
\begin{equation*}\label{Higher2}
\slashed{\Box}_{\hat g}\slashed{\nabla}^k_{\mathbb{S}^2}\underline{\phi}_a + \slashed{\nabla}^k_{\mathbb{S}^2}\underline{\phi}_a = 0, \, \slashed{\Box}_{\hat g}\slashed{\nabla}^k_{\mathbb{S}^2}\widehat{\underline{\alpha}}_a + 2R(1-3MR)\slashed{\nabla}_R\slashed{\nabla}_{\mathbb{S}^2}^k\widehat{\underline{\alpha}}_a + \slashed{\nabla}^k_{\mathbb{S}^2}\widehat{\underline{\alpha}}_a = 0
\end{equation*}
Therefore, the approximate conservation laws \eqref{zero1} and \eqref{zero2} are valid at high order for $\slashed{\nabla}_u^k$ and $\slashed{\nabla}_{\mathbb{S}^2}^k$.

We commute the operator $\slashed{\nabla}_R$ into the equations \eqref{ReFac02} and \eqref{ReTeu2} to obtain
\begin{equation*}
\slashed{\Box}_{\hat g} \slashed{\nabla}_R\underline{\phi}_a + \slashed{\nabla}_R\underline{\phi}_a = 2(1-3M)R\slashed{\nabla}_R^2\underline{\phi}_a - 2(1-6MR)\slashed{\nabla}_R\underline{\phi}_a.
\end{equation*}
and
\begin{equation*}
\slashed{\Box}_{\hat g} \slashed{\nabla}_R \widehat{\underline{\alpha}}_a + \slashed{\nabla}_R\widehat{\underline{\alpha}}_a = 6MR(R-1)\slashed{\nabla}^2_R\widehat{\underline{\alpha}}_a - 4(1-6MR)\slashed{\nabla}_R\widehat{\underline{\alpha}}_a.
\end{equation*}
Therefore, we obtain the approximate conservation laws for $\slashed{\nabla}_R\underline{\phi}_a$ and $\slashed{\nabla}_R\widehat{\underline{\alpha}}_a$ as follows
\begin{eqnarray}\label{order1}
\slashed{\nabla}^cJ_c(\slashed{\nabla}_R\underline{\phi}_a) &=& \left( 2(1-3M)R\slashed{\nabla}_R^2\underline{\phi}_a - 2(1-6MR)\slashed{\nabla}_R\underline{\phi}_a \right)\slashed{\nabla}_T\slashed{\nabla}_R\underline{\phi}^a + (\slashed{\nabla}_cT_d) \mathbb{T}^{cd}(\slashed{\nabla}_R\underline{\phi}_a)\cr
&=& \left( 2(1-3M)R\slashed{\nabla}_R^2\underline{\phi}_a - 2(1-6MR)\slashed{\nabla}_R\underline{\phi}_a \right)\left( u^2\slashed{\nabla}_u\slashed{\nabla}_R\underline{\phi}^a - 2(1+uR)\slashed{\nabla}_R^2\underline{\phi}^a \right)\cr
&&+ 4MR^2(3+uR)|\slashed{\nabla}^2_R\underline{\phi}_a|^2.
\end{eqnarray}
and
\begin{eqnarray}\label{order2}
\slashed{\nabla}^cJ_c(\slashed{\nabla}_R\widehat{\underline{\alpha}}_a) &=& \left( 6MR(R-1)\slashed{\nabla}_R^2\widehat{\underline{\alpha}}_a - 4(1-6MR)\slashed{\nabla}_R\widehat{\underline{\alpha}}_a \right)\slashed{\nabla}_T\slashed{\nabla}_R\widehat{\underline{\alpha}}^a + (\slashed{\nabla}_cT_d) \mathbb{T}^{cd}(\slashed{\nabla}_R\widehat{\underline{\alpha}}_a)\cr
&=& \left( 6MR(R-1)\slashed{\nabla}_R^2\widehat{\underline{\alpha}}_a - 4(1-6MR)\slashed{\nabla}_R\widehat{\underline{\alpha}}_a \right)\left( u^2\slashed{\nabla}_u\slashed{\nabla}_R\widehat{\underline{\alpha}}^a - 2(1+uR)\slashed{\nabla}_R^2\widehat{\underline{\alpha}}^a \right)\cr
&&+ 4MR^2(3+uR)|\slashed{\nabla}^2_R\widehat{\underline{\alpha}}_a|^2.
\end{eqnarray}
By the same way we can obtain the approximate conservation laws for $\slashed{\nabla}_R^k\underline{\phi}_a$ and $\slashed{\nabla}_R^k\widehat{\underline{\alpha}}_a$ for all $k\in \mathbb{N}$.

\subsection{Energy fluxes}
Moreover, we follow the convention used by Penrose and Rindler \cite{PeRi84} about the Hodge dual of a $1$-form $\alpha$ on a spacetime $({\cal M},g)$ (i.e. a $4-$dimensional Lorentzian manifold that is oriented and time-oriented):
\begin{equation*}
(*\alpha)_{abcd} = e_{abcd}{\alpha}^d,
\end{equation*}
where $e_{abcd}$ is the volume form on $({\cal M},g)$, denoted simply by $\mathrm{dVol}_g$. We shall use the following differential operator of the Hodge star
\begin{equation*}
\d *\alpha = -\frac{1}{4}(\nabla_a\alpha^a)\mathrm{dVol}^4_g.
\end{equation*}
If $\mathcal{S}$ is the boundary of a bounded open set $\Omega$ and has outgoing
orientation, using Stokes theorem, we have
\begin{equation}\label{Stokesformula}
-4\int_{\mathcal{S}}*\alpha = \int_{\Omega}(\nabla_a\alpha^a)\mathrm{dVol}^4_g.
\end{equation}
Let $\underline{\phi}_a$ and $\widehat{\underline{\alpha}}_a$ be solutions of \eqref{ReFac02} and \eqref{ReTeu2} with smooth and compactly supported initial data on the rescaled spacetime $(\bar{\mathcal{M}},\hat{g})$. By using \eqref{Stokesformula} we define the rescaled energy fluxes associated with the Morawertz vector field $T$, through an oriented hypersurface $\mathcal{S}$ as follows
\begin{equation}\label{e1}
\mathcal{E}_{\mathcal {S}}(\underline{\phi}_a) = -4\int_{\mathcal{S}} *J_c(\underline{\phi}_a)\d x^c = \int_{\mathcal{S}} J_c(\underline{\phi}_a)\hat{N}^c\hat{L}\hook \mathrm{dVol}_{\hat g}
\end{equation}
and
\begin{equation}\label{e2}
\mathcal{E}_{\mathcal {S}}(\widehat{\underline{\alpha}}_a) = -4\int_{\mathcal{S}} *J_c(\widehat{\underline{\alpha}}_a)\d x^c = \int_{\mathcal{S}} J_c(\widehat{\underline{\alpha}}_a)\hat{N}^c\hat{L}\hook \mathrm{dVol}_{\hat g}
\end{equation}
where $\hat{L}$ is a transverse vector to $\mathcal{S}$ and $\hat{N}$ is the normal vector field to $\mathcal{S}$ such that
$\hat{L}^a\hat{N}_a=1$.

We recall the following Poincar\'e-type inequality (see \cite[Lemma 4.2]{MaNi2009}):
\begin{lemma}\label{PoinIne}
Given $u_0<0$, there exists a constant $C > 0$ such that for any $f\in C_0^\infty(\mathbb{R})$ such that
$$\int_{-\infty}^{u_0}(f(u))^2 \d u \leq C \int_{-\infty}^{u_0}|u|^2(f'(u))^2\d u.$$
\end{lemma}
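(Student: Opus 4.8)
The plan is to recognise this as a one-dimensional weighted Hardy inequality on the half-line $(-\infty, u_0)$, where the decisive feature is that the weight $|u|^2 = u^2$ does not degenerate at the finite endpoint (since $u_0 < 0$) and grows toward $-\infty$. The entire argument reduces to a single integration by parts combined with the Cauchy--Schwarz inequality, and the hypothesis $u_0 < 0$ is precisely what makes the resulting boundary term have the favourable sign.

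First I would integrate by parts in $\int_{-\infty}^{u_0} (f(u))^2 \d u$, using the primitive $u$ of the constant $1$:
\begin{equation*}
\int_{-\infty}^{u_0} (f(u))^2 \d u = \left[ u (f(u))^2 \right]_{-\infty}^{u_0} - 2\int_{-\infty}^{u_0} u\, f(u)\, f'(u) \d u.
\end{equation*}
The contribution at $-\infty$ vanishes because $f \in C_0^\infty(\mathbb{R})$ has compact support, while the contribution at $u_0$ equals $u_0 (f(u_0))^2$, which is nonpositive exactly because $u_0 < 0$. Discarding this nonpositive boundary term yields
\begin{equation*}
\int_{-\infty}^{u_0} (f(u))^2 \d u \leq 2 \int_{-\infty}^{u_0} |f(u)|\cdot |u|\, |f'(u)| \d u.
\end{equation*}

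Next I would apply Cauchy--Schwarz to the right-hand side, pairing the factor $|f(u)|$ against $|u|\,|f'(u)|$, to obtain
\begin{equation*}
\int_{-\infty}^{u_0} (f(u))^2 \d u \leq 2 \left( \int_{-\infty}^{u_0} (f(u))^2 \d u \right)^{1/2} \left( \int_{-\infty}^{u_0} |u|^2 (f'(u))^2 \d u \right)^{1/2}.
\end{equation*}
Writing $A = \int_{-\infty}^{u_0} (f(u))^2 \d u$ and $B = \int_{-\infty}^{u_0} |u|^2 (f'(u))^2 \d u$, this reads $A \leq 2 A^{1/2} B^{1/2}$. If $A = 0$ the asserted inequality holds trivially; otherwise dividing by $A^{1/2}$ gives $A \leq 4B$, so that $C = 4$ works (in fact uniformly in $u_0$).

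The argument presents no genuine obstacle beyond careful bookkeeping. The only delicate point is the sign of the boundary term $u_0 (f(u_0))^2$: this is where the hypothesis $u_0 < 0$ enters in an essential way, since on a half-line with positive finite endpoint the same integration by parts would produce a boundary term of the wrong sign and the estimate would fail. The compact support of $f$ removes the endpoint at $-\infty$, and the degenerate case $A = 0$ (that is, $f \equiv 0$ on $(-\infty, u_0)$) must be isolated before dividing.
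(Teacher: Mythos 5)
Your proof is correct. The paper gives no argument of its own for this lemma---it simply cites Lemma 4.2 of Mason--Nicolas \cite{MaNi2009}---and the proof there is essentially the one you wrote: integrate by parts so that the boundary term $u_0\,(f(u_0))^2$ is nonpositive precisely because $u_0<0$, apply Cauchy--Schwarz, and divide by $\left(\int_{-\infty}^{u_0}(f(u))^2\,\d u\right)^{1/2}$ (after isolating the trivial case), giving the uniform constant $C=4$.
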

Using Lemma \ref{PoinIne}, we can give the simpler equivalent expressions of energy fluxes for equations \eqref{ReFac02} and \eqref{ReTeu2} across the leaves of the foliation $\mathcal{H}_s$ of $\Omega_{u_0}^+$:
\begin{lemma}\label{Energyfluxes}
For $|u_0|$ large enough, the energy fluxes of $\underline{\phi}_a$ and $\widehat{\underline{\alpha}}_a $ through the hypersurface $\mathcal{H}_s, \, 0<s\leq 1$ and $H_0 = \scri_{u_0}^+$ have the following simpler equivalent expressions
\begin{equation*}
\mathcal{E}_{\mathcal{H}_s}(\underline{\phi}_a) \simeq \int_{\mathcal{H}_s} \left( |u|^2|\slashed{\nabla}_u\underline{\phi}_a|^2 + \frac{R}{|u|}|\slashed{\nabla}_R\underline{\phi}_a|^2 + |\slashed{\nabla}_{\mathbb{S}^2}\underline{\phi}_a|^2 + |\underline{\phi}_a|^2 \right)\d u \d^2\omega,
\end{equation*}
\begin{equation*}
\mathcal{E}_{\mathcal{H}_s}(\widehat{\underline{\alpha}}_a) \simeq \int_{\mathcal{H}_s} \left( |u|^2|\slashed{\nabla}_u\widehat{\underline{\alpha}}_a|^2 + \frac{R}{|u|}|\slashed{\nabla}_R\widehat{\underline{\alpha}}_a|^2 + |\slashed{\nabla}_{\mathbb{S}^2}\widehat{\underline{\alpha}}_a|^2 + |\widehat{\underline{\alpha}}_a|^2 \right)\d u \d^2\omega
\end{equation*}
and
\begin{equation*}
\mathcal{E}_{\scri_{u_0}^+}(\underline{\phi}_a) \simeq \int_{\mathcal{H}_s} \left( |u|^2|\slashed{\nabla}_u\underline{\phi}_a|^2 + |\slashed{\nabla}_{\mathbb{S}^2}\underline{\phi}_a|^2 + |\underline{\phi}_a|^2 \right)\d u \d^2\omega,
\end{equation*}
\begin{equation*}
\mathcal{E}_{\scri_{u_0}^+}(\widehat{\underline{\alpha}}_a) \simeq \int_{\mathcal{H}_s} \left( |u|^2|\slashed{\nabla}_u\widehat{\underline{\alpha}}_a|^2 + |\slashed{\nabla}_{\mathbb{S}^2}\widehat{\underline{\alpha}}_a|^2 + |\widehat{\underline{\alpha}}_a|^2 \right)\d u \d^2\omega.
\end{equation*}
Here, we denote that 
$$|\slashed{\nabla}_{\mathbb{S}^2}\underline{\phi}_a|^2 = |\slashed{\nabla}_{\partial_\theta}\underline{\phi}_a|^2 + \frac{1}{\sin^2\theta}|\slashed{\nabla}_{\partial_\varphi}\underline{\phi}_a|^2, \,\, |\slashed{\nabla}_{\mathbb{S}^2}\widehat{\underline{\alpha}}_a|^2 = |\slashed{\nabla}_{\partial_\theta}\widehat{\underline{\alpha}}_a|^2 + \frac{1}{\sin^2\theta}|\slashed{\nabla}_{\partial_\varphi}\widehat{\underline{\alpha}}_a|^2.$$
\end{lemma}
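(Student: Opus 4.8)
The two stated equivalences rest on a single computation. Since the energy currents $J_c(\underline{\phi}_a)$ and $J_c(\widehat{\underline{\alpha}}_a)$ are built from the \emph{same} Klein--Gordon stress--energy tensor $\mathbb{T}_{cd}$, and the Teukolsky-specific first-order terms enter only the divergence $\slashed{\nabla}^cJ_c$ and not the flux integrand, it suffices to evaluate $J_c(\xi_a)\hat{N}^c$ for a generic $1$-form $\xi_a$ and then set $\xi_a=\underline{\phi}_a$ or $\xi_a=\widehat{\underline{\alpha}}_a$. The plan is therefore to make the geometry of each leaf $\mathcal{H}_s=\{u=-sr_*,\ u\le u_0\}$ explicit, to expand $\mathbb{T}(T,\hat{N})$ into a quadratic form in $(\slashed{\nabla}_u\xi_a,\slashed{\nabla}_R\xi_a,\slashed{\nabla}_{\mathbb{S}^2}\xi_a,\xi_a)$, and finally to use Lemma \ref{epsilonestimates} to replace every coefficient by a clean power of $|u|$ and $R$. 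Writing the defining function of $\mathcal{H}_s$ as $\psi=u+sr_*(R)$ and using $\d r_*=-(FR^2)^{-1}\d R$ with $F=1-2MR$, I would raise $\d\psi$ with the inverse rescaled metric ($\hat{g}^{uu}=0$, $\hat{g}^{uR}=-1$, $\hat{g}^{RR}=-R^2F$) to get the normal direction $N_0=\tfrac{s}{FR^2}\partial_u+(s-1)\partial_R$; taking the transverse field $\hat{L}=\nu$ gives the exact measure $\nu\hook\mathrm{dVol}_{\hat g}|_{\mathcal{H}_s}=-r_*^2R^2F|u|^{-1}\d u\,\d^2\omega$, and the normalisation $\hat{g}(\hat{N},\nu)=1$ fixes $\hat{N}$ as a multiple of $N_0$ carrying a factor $\sim s/|u|$ (since $|u|=sr_*$ on $\mathcal{H}_s$). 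Combined with the density $\sim|u|^{-1}$ from Lemma \ref{epsilonestimates}, the flux integrand is then $\simeq s|u|^{-2}\,\mathbb{T}(T,N_0)\,\d u\,\d^2\omega$.

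The core step is the expansion
\[
\mathbb{T}(T,N_0)=\slashed{\nabla}_T\xi_a\,\slashed{\nabla}_{N_0}\xi^a-\tfrac12\langle\slashed{\nabla}\xi_a,\slashed{\nabla}\xi^a\rangle_{\hat g}\,\hat{g}(T,N_0)+\tfrac12|\xi_a|^2\,\hat{g}(T,N_0),
\]
with $T=u^2\partial_u-2(1+uR)\partial_R$. Substituting $|u|=sr_*$ together with $r_*R\simeq1$ and $F\simeq1$ from Lemma \ref{epsilonestimates} collapses the coefficients: after multiplication by $s|u|^{-2}$ the $|\slashed{\nabla}_u\xi_a|^2$ coefficient becomes $\simeq|u|^2$, the $|\slashed{\nabla}_R\xi_a|^2$ coefficient becomes $\simeq R|u|^{-1}$, the angular coefficient is $\simeq1$, and the zeroth-order term $\tfrac12|\xi_a|^2\hat{g}(T,N_0)$ has a uniformly positive coefficient $\simeq1$ (here $|u_0|$ large forces $\hat{g}(T,N_0)>0$, the $u^2$ piece dominating the $(1+uR)$-piece). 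The only surviving off-diagonal contribution is a cross term $\simeq s^2\,\slashed{\nabla}_u\xi_a\,\slashed{\nabla}_R\xi^a$.

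For the upper bound I would dominate this cross term by Young's inequality against the two diagonal terms, which is possible precisely because $s^3\le1$ on $[0,1]$, so that the integrand is $\lesssim$ the stated expression. For the lower bound I would verify that the quadratic form in $(\slashed{\nabla}_u\xi_a,\slashed{\nabla}_R\xi_a)$ is uniformly coercive: the square of its cross coefficient ($\simeq s^4$) stays below four times the product of its diagonal coefficients ($\simeq s$), again because $s^3\le1$, so the discriminant is strictly negative on $[0,1]$; together with the manifestly positive angular and zeroth-order terms this yields the reverse inequality. The Poincar\'e inequality of Lemma \ref{PoinIne} is used here to control the zeroth-order contribution $\int|\xi_a|^2\,\d u\,\d^2\omega$ by the weighted gradient term $\int|u|^2|\slashed{\nabla}_u\xi_a|^2\,\d u\,\d^2\omega$, which keeps the inclusion of $|\xi_a|^2$ in the norm harmless and the equivalence uniform in $s$. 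Finally $\scri^+_{u_0}$ is handled directly: there the normal is $\propto\partial_u$, the $R$-weighted term drops out, and the $\slashed{\nabla}_u$--$\slashed{\nabla}_R$ cross terms in $\mathbb{T}(T,\partial_u)$ cancel exactly, producing the degenerate expression.

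The delicate point, and the main obstacle, is the coercivity (lower) bound. The Klein--Gordon stress--energy tensor on a Lorentzian background is sign-indefinite, so positivity of $\mathbb{T}(T,N_0)$ is not automatic; it has to be extracted from the indefinite cross term together with the behaviour of $(1+uR)$, which in fact changes sign as $s\to1$. Keeping the discriminant of the $(\slashed{\nabla}_u,\slashed{\nabla}_R)$-form negative and $\hat{g}(T,N_0)$ positive \emph{uniformly} in $s\in[0,1]$ is what forces $u_0\ll-1$ with $|u_0|$ large (equivalently $\varepsilon$ small in Lemma \ref{epsilonestimates}), and is the crux of the estimate.
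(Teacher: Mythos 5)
Your proposal is correct and is essentially the paper's own approach: the paper proves Lemma \ref{Energyfluxes} simply by deferring to the scalar-wave computation (Lemma 4.1 of \cite{MaNi2009}, or \cite{NiXuan2019}), and what you have written out---the foliation/Stokes flux with $\hat L=\nu$, the normal $N_0=\tfrac{s}{FR^2}\partial_u+(s-1)\partial_R$ rescaled by $s/|u|$, simplification of all coefficients via Lemma \ref{epsilonestimates}, the exact cross term $s^2\slashed{\nabla}_u\xi_a\slashed{\nabla}_R\xi^a$ controlled by Young's inequality and the discriminant condition (using $s^3\le 1$ and $|u_0|$ large), and the exact cancellation of cross terms at $\scri^+_{u_0}$---is precisely that cited argument carried over to the tensorial setting, with correct identification of the crux (uniform coercivity despite the sign change of $1+uR$ near $s=1$). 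The only cosmetic difference is that, since the paper's stress-energy tensor already contains the Klein--Gordon mass term, the $|\xi_a|^2$ contribution appears directly with coefficient $\simeq\tfrac{s}{2|u|^2}\hat g(T,N_0)\simeq 1$, so your additional appeal to Lemma \ref{PoinIne} is a harmless redundancy (it is needed in the massless scalar case of \cite{MaNi2009}) rather than a necessity.
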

\begin{proof}
The proof is similar to the case of scalar wave equation using formulas \eqref{e1}, \eqref{e2} and Lemma \ref{PoinIne} (see Lemma 4.1 in \cite{MaNi2009} or Lemma 4.1 in \cite{NiXuan2019}).
\end{proof}

\section{Peeling}\label{Pee}
\subsection{Peeling for tensorial Fackerell-Ipser equations}\label{Sub1}
Integrating the approximate conservation law \eqref{zero1} on the domain
\begin{equation}\label{Omega}
\Omega_{u_0}^{s_1,s_2}:= \Omega_{u_0}^+\cap \left\{ s_1 \leq s \leq s_2 \right\} \hbox{  with  } 0 \leq s_1<s_2\leq 1, 
\end{equation}
we get
\begin{eqnarray}\label{zero}
&&\left| \mathcal{E}_{\mathcal{H}_{s_1}}(\underline{\phi}_a) + \mathcal{E}_{\mathcal{S}^{s_1,s_2}_{u_0}}(\underline{\phi}_a) - \mathcal{E}_{\mathcal{H}_{s_2}}(\underline{\phi}_a) \right|\cr
&\simeq& \left| \int_{s_1}^{s_2}\int_{\mathcal{H}_s} 4MR^2(3+uR)|\slashed{\nabla}_R\underline{\phi}_a|^2 \frac{1}{|u|}\d u \d^2\omega \d s \right|\cr
&\lesssim& \int_{s_1}^{s_2}\int_{\mathcal{H}_s} 4MR^2(3+uR)|\slashed{\nabla}_R\underline{\phi}_a|^2 \frac{1}{|u|}\d u \d^2\omega \d s\cr
&\lesssim& \int_{s_1}^{s_2}\int_{\mathcal{H}_s} \frac{R^2}{|u|}|\slashed{\nabla}_R\underline{\phi}_a|^2\d u \d^2\omega \d s\cr
&\lesssim& \int_{s_1}^{s_2}\int_{\mathcal{H}_s} \frac{1}{\sqrt s}\frac{R}{|u|}|\slashed{\nabla}_R\underline{\phi}_a|^2\d u \d^2\omega \d s\cr
&\lesssim& \int_{s_1}^{s_2}\frac{1}{\sqrt s}\mathcal{E}_{\mathcal{H}_s}(\underline{\phi}_a) \d s,
\end{eqnarray}
Here, we used the fact that on $\mathcal{H}_s$, we have $u=-sr_*$, then  $\dfrac{R^2}{|u|} \simeq \dfrac{s}{|u|}\dfrac{R}{|u|}\leqslant \dfrac{1}{\sqrt s}\dfrac{R}{|u|}$  for $u\leq u_0 \ll -1$ and $R$ large enough. The inequality \eqref{zero} leads to
\begin{eqnarray}
&&\mathcal{E}_{\mathcal{H}_{s_1}}(\underline{\phi}_a) \lesssim \mathcal{E}_{\mathcal{H}_1}(\underline{\phi}_a) + \int_{s_1}^1 \frac{1}{\sqrt s}\mathcal{E}_{\mathcal{H}_s}(\underline{\phi}_a) \d s, \label{0}\\
&&\mathcal{E}_{\mathcal{H}_{s_2}}(\underline{\phi}_a) \lesssim \mathcal{E}_{\mathcal{S}^+_{u_0}}(\underline{\phi}_a) + \mathcal{E}_{\scri^+_{u_0}}(\underline{\phi}_a) + \int_0^{s_2}\frac{1}{\sqrt s}\mathcal{E}_{\mathcal{H}_s}(\underline{\phi}_a) \d s \label{1}
\end{eqnarray}
for all $0\leq s_1<s_2\leq 1$. Note that, the factor $\dfrac{1}{\sqrt s}$ is not really necessary here because we can estimate $\dfrac{R^2}{|u|}\leq \dfrac{R}{|u|}$ in inequality \eqref{zero}, and then inequalities $\eqref{0}$ and $\eqref{1}$ have not the factor $\dfrac{1}{\sqrt{s}}$. However, we still present $\dfrac{1}{\sqrt s}$  to serve the estimates at higher order below (see inequality \eqref{Gron}).

Since the function $\dfrac{1}{\sqrt{s}}$ is integrable on $[0,\,1]$, we use Gronwall’s inequality for inequalities \eqref{0} and \eqref{1} with the scalar function $y(s)=\mathcal{E}_{\mathcal{H}_s}(\underline{\phi}_a)$ and get the following result for energy estimates at zero order.
\begin{theorem}\label{P01}
For $u_0<0$ and $|u_0|$ large enough  and for any smooth compactly supported initial data
at $\Sigma_0= \left\{ t = 0 \right\}$, the associated solutions $\underline{\phi}_a$ of \eqref{ReFac02}
satisfying that
$$\mathcal{E}_{\scri^+_{u_0}}(\underline{\phi}_a) \lesssim \mathcal{E}_{\mathcal{H}_1}(\underline{\phi}_a),$$
$$\mathcal{E}_{\mathcal{H}_1}(\underline{\phi}_a) \lesssim \mathcal{E}_{\scri^+_{u_0}}(\underline{\phi}_a) + \mathcal{E}_{{\mathcal{S}^+_{u_0}}}(\underline{\phi}_a).$$
\end{theorem}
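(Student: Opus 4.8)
The plan is to read off both inequalities of the theorem directly from the two integral inequalities \eqref{0} and \eqref{1} by applying Gronwall's lemma to the scalar function $y(s) := \mathcal{E}_{\mathcal{H}_s}(\underline{\phi}_a)$. Two preliminary observations make this work. First, by the foliation of Subsection \ref{S22} the endpoint values are $y(1) = \mathcal{E}_{\mathcal{H}_1}(\underline{\phi}_a)$ (the flux on the part of $\Sigma_0$ inside $\Omega_{u_0}^+$) and $y(0) = \mathcal{E}_{\scri^+_{u_0}}(\underline{\phi}_a)$. Second, the singular weight $1/\sqrt{s}$ is integrable on $[0,1]$, with $\int_0^1 s^{-1/2}\,\d s = 2$. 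For smooth compactly supported data the solution is smooth on the compactified domain, so $y$ is bounded on $[0,1]$ and all fluxes appearing are finite; this is all that is required for Gronwall to apply.

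For the first estimate I would use the backward form of Gronwall. Reading \eqref{0} with $s_1 = s$ as
\[
y(s) \leq C\, y(1) + C\int_s^1 \sigma^{-1/2}\, y(\sigma)\,\d\sigma, \qquad s\in[0,1],
\]
the lemma gives
\[
y(s) \leq C\, y(1)\exp\!\left( C\int_s^1 \sigma^{-1/2}\,\d\sigma\right) \leq C e^{2C}\, y(1)
\]
uniformly in $s$. Evaluating at $s=0$ produces $\mathcal{E}_{\scri^+_{u_0}}(\underline{\phi}_a) \lesssim \mathcal{E}_{\mathcal{H}_1}(\underline{\phi}_a)$.

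For the second estimate I would use the forward form of Gronwall. Setting $A := \mathcal{E}_{\mathcal{S}^+_{u_0}}(\underline{\phi}_a) + \mathcal{E}_{\scri^+_{u_0}}(\underline{\phi}_a)$ and reading \eqref{1} with $s_2 = s$ as $y(s) \leq C A + C\int_0^s \sigma^{-1/2}\, y(\sigma)\,\d\sigma$, Gronwall yields
\[
y(s) \leq C A\,\exp\!\left( C\int_0^s \sigma^{-1/2}\,\d\sigma\right) \leq C e^{2C} A
\]
for all $s\in[0,1]$. Evaluating at $s=1$ gives $\mathcal{E}_{\mathcal{H}_1}(\underline{\phi}_a) \lesssim \mathcal{E}_{\scri^+_{u_0}}(\underline{\phi}_a) + \mathcal{E}_{\mathcal{S}^+_{u_0}}(\underline{\phi}_a)$, exactly the claimed bound.

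The only delicate point — and the main, though mild, obstacle — is the applicability of Gronwall with the kernel $\sigma^{-1/2}$ singular at the endpoint $s=0$ corresponding to $\scri^+$. Since the kernel is integrable, the usual proof of Gronwall (multiply by the integrating factor $\exp(-C\int \sigma^{-1/2})$ and integrate, or iterate the inequality) carries over without change and yields the finite constant $e^{2C}$; no cancellation or special structure of the Fackerell-Ipser equation beyond \eqref{0}--\eqref{1} is used. One should merely confirm, as noted above, that $y$ remains bounded up to $s=0$, which is guaranteed by Lemma \ref{Energyfluxes} together with the finiteness of all fluxes for smooth compactly supported data.
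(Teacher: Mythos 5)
Your proposal is correct and follows essentially the same route as the paper: the paper likewise derives the two integral inequalities \eqref{0} and \eqref{1} from the approximate conservation law \eqref{zero1} and then invokes Gronwall's inequality for $y(s)=\mathcal{E}_{\mathcal{H}_s}(\underline{\phi}_a)$, using the integrability of $1/\sqrt{s}$ on $[0,1]$. Your write-up merely makes explicit what the paper leaves implicit (the backward versus forward forms of Gronwall, the endpoint identifications $y(0)=\mathcal{E}_{\scri^+_{u_0}}$, $y(1)=\mathcal{E}_{\mathcal{H}_1}$, and the harmlessness of the integrable singular kernel), so there is no gap to report.
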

Since the approximate conservation law \eqref{zero1} is valid for $\slashed{\nabla}_u^k\underline{\phi}_a$ and $\slashed{\nabla}_{\mathbb{S}^2}^k\underline{\phi}_a$ with $k\in \mathbb{N}$, we have the following estimates at higher order for all projected covariant derivatives.
\begin{theorem}\label{P1}
For $u_0<0$ and $|u_0|$ large enough  and for any smooth compactly supported initial data
at $\Sigma_0= \left\{ t = 0 \right\}$, the associated solutions $\underline{\phi}_a$ of \eqref{ReFac02}
satisfying that
$$\mathcal{E}_{\scri^+_{u_0}}(\slashed{\nabla}_u^k\slashed{\nabla}_{\mathbb{S}^2}^l\underline{\phi}_a) \lesssim \mathcal{E}_{\mathcal{H}_1}(\slashed{\nabla}_u^k\slashed{\nabla}_{\mathbb{S}^2}^l\underline{\phi}_a),$$
$$\mathcal{E}_{\mathcal{H}_1}(\slashed{\nabla}_u^k\slashed{\nabla}_{\mathbb{S}^2}^l\underline{\phi}_a) \lesssim \mathcal{E}_{\scri^+_{u_0}}(\slashed{\nabla}_u^k\slashed{\nabla}_{\mathbb{S}^2}^l\underline{\phi}_a) + \mathcal{E}_{\mathcal{H}_{\mathcal{S}^+_{u_0}}}(\slashed{\nabla}_u^k\slashed{\nabla}_{\mathbb{S}^2}^l\underline{\phi}_a),$$
where $k,l\in \mathbb{N}$.
\end{theorem}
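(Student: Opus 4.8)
The plan is to deduce Theorem \ref{P1} from Theorem \ref{P01} by observing that, for every $k,l\in\mathbb{N}$, the field
$$\psi_a := \slashed{\nabla}_u^k\slashed{\nabla}_{\mathbb{S}^2}^l\underline{\phi}_a$$
is again a solution of the tensorial Fackerell-Ipser equation \eqref{ReFac02}, with data still smooth and compactly supported on $\Sigma_0$. The decisive input is the pair of commutation identities recorded just before the theorem,
$$\slashed{\Box}_{\hat g}\slashed{\nabla}^k_u\underline{\phi}_a + \slashed{\nabla}^k_u\underline{\phi}_a = 0, \qquad \slashed{\Box}_{\hat g}\slashed{\nabla}^k_{\mathbb{S}^2}\underline{\phi}_a + \slashed{\nabla}^k_{\mathbb{S}^2}\underline{\phi}_a = 0,$$
which hold because $\partial_u$ coincides with the static Killing field $\partial_t$ (in the retarded chart $(u,R,\theta,\varphi)$ the coordinate field $\partial_u$ is $\partial_t$) and because $\hat g$ is invariant under the rotations of the spheres $\mathbb{S}^2_{(t,r)}$. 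Since these are identities at the level of operators, $[\slashed{\nabla}_u,\slashed{\Box}_{\hat g}+1]=0$ and $[\slashed{\nabla}_{\mathbb{S}^2},\slashed{\Box}_{\hat g}+1]=0$, they may be iterated and composed, so that $\psi_a$ solves $\slashed{\Box}_{\hat g}\psi_a+\psi_a=0$, i.e. \eqref{ReFac02}. This is precisely why only the $\slashed{\nabla}_u$ and $\slashed{\nabla}_{\mathbb{S}^2}$ directions are admitted here: for $\slashed{\nabla}_R$ the commutator carries genuine first- and second-order remainders (as displayed in \eqref{order1}), which break the exact conservation law and require the separate treatment of that subsection.

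With $\psi_a$ recognised as an admissible Fackerell-Ipser solution, I would simply run the argument of Subsection \ref{Sub1} verbatim with $\underline{\phi}_a$ replaced by $\psi_a$. Concretely, the approximate conservation law \eqref{zero1} holds in the form $\slashed{\nabla}^cJ_c(\psi_a)=4MR^2(3+uR)|\slashed{\nabla}_R\psi_a|^2$; integrating it over the slab $\Omega_{u_0}^{s_1,s_2}$ of \eqref{Omega} and using the same pointwise bound $R^2/|u|\simeq (s/|u|)(R/|u|)\leq s^{-1/2}(R/|u|)$ on $\mathcal{H}_s$ that produced \eqref{zero} gives
$$\mathcal{E}_{\mathcal{H}_{s_1}}(\psi_a) \lesssim \mathcal{E}_{\mathcal{H}_1}(\psi_a) + \int_{s_1}^1 \frac{1}{\sqrt s}\,\mathcal{E}_{\mathcal{H}_s}(\psi_a)\,\d s,$$
$$\mathcal{E}_{\mathcal{H}_{s_2}}(\psi_a) \lesssim \mathcal{E}_{\mathcal{S}^+_{u_0}}(\psi_a) + \mathcal{E}_{\scri^+_{u_0}}(\psi_a) + \int_0^{s_2} \frac{1}{\sqrt s}\,\mathcal{E}_{\mathcal{H}_s}(\psi_a)\,\d s.$$
Because $s^{-1/2}$ is integrable on $[0,1]$, Gronwall's inequality applied to $y(s)=\mathcal{E}_{\mathcal{H}_s}(\psi_a)$ closes each estimate; letting $s_1\to 0$ in the first and setting $s_2=1$ in the second yields the two advertised bounds. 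Equivalently, one may invoke Theorem \ref{P01} as a black box with $\underline{\phi}_a:=\psi_a$, since its hypotheses are met.

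The only point demanding genuine care is the verification underlying the first paragraph: that the angular operator $\slashed{\nabla}_{\mathbb{S}^2}$ — which differentiates $\mathcal{B}$-valued sections and therefore carries the Levi-Civita connection of $g_{\mathbb{S}^2}$ together with its curvature — commutes exactly with $\slashed{\Box}_{\hat g}+1$, with no residual zeroth-order term beyond the $+1$ already present, and likewise that the mixed iteration of $\slashed{\nabla}_u$ and $\slashed{\nabla}_{\mathbb{S}^2}$ produces no obstruction. Once these operator identities are confirmed from the isometry group of $(\bar{\mathcal{M}},\hat g)$, everything else is a transcription of the zero-order computation and introduces no new analytic difficulty; the propagation of smoothness and compact support of the data to $\psi_a$ (the transverse $\slashed{\nabla}_u$-derivatives on $\Sigma_0$ being fixed by the equation) is standard for this linear hyperbolic system.
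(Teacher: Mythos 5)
Your proposal is correct and matches the paper's own argument: the paper likewise invokes the commutation identities $\slashed{\Box}_{\hat g}\slashed{\nabla}^k_u\underline{\phi}_a + \slashed{\nabla}^k_u\underline{\phi}_a = 0$ and $\slashed{\Box}_{\hat g}\slashed{\nabla}^k_{\mathbb{S}^2}\underline{\phi}_a + \slashed{\nabla}^k_{\mathbb{S}^2}\underline{\phi}_a = 0$ (attributed to the symmetries of Schwarzschild), concludes that the approximate conservation law \eqref{zero1} holds for these derivatives, and then repeats the zero-order integration-plus-Gronwall argument of Theorem \ref{P01}. Your extra attention to the mixed iteration $\slashed{\nabla}_u^k\slashed{\nabla}_{\mathbb{S}^2}^l$ and to the curvature terms potentially arising from $\slashed{\nabla}_{\mathbb{S}^2}$ is more careful than the paper, which simply asserts these identities, but the route is the same.
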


Since on the hypersurface $\mathcal{H}_s$, we have $s=-\dfrac{u}{r_*}\simeq -uR$, then we obtain the equivalence
\begin{equation}\label{BasicEq}
\frac{1}{|u|} \simeq \frac{1}{\sqrt s}\sqrt{\frac{R}{|u|}}.
\end{equation}
Now integrating the approximate conservation law \eqref{order1} and using \eqref{BasicEq} we obtain
\begin{eqnarray}\label{first}
&&\left| \mathcal{E}_{\mathcal{H}_{s_1}}(\slashed{\nabla}_R\underline{\phi}_a) + \mathcal{E}_{\mathcal{S}^{s_1,s_2}_{u_0}}(\slashed{\nabla}_R\underline{\phi}_a) - \mathcal{E}_{\mathcal{H}_{s_2}}(\slashed{\nabla}_R\underline{\phi}_a) \right|\cr
&\lesssim& \int_{s_1}^{s_2}\int_{\mathcal{H}_s}\left| \left( 2(1-3M)R\slashed{\nabla}_R^2\underline{\phi}_a - 2(1-6MR)\slashed{\nabla}_R\underline{\phi}_a \right)\right.\cr
&&\hspace{4cm}\times\left.\left( u^2\slashed{\nabla}_u\slashed{\nabla}_R\underline{\phi}^a - 2(1+uR)\slashed{\nabla}_R^2\underline{\phi}^a \right)\right|  \frac{1}{|u|}\d u \d^2\omega \d s \cr
&&+ \int_{s_1}^{s_2}\int_{\mathcal{H}_s} 4MR^2(3+uR)|\slashed{\nabla}^2_R\underline{\phi}_a|^2 \frac{1}{|u|}\d u \d^2\omega \d s \cr
&\lesssim& \int_{s_1}^{s_2}\int_{\mathcal{H}_s} \left(2|1-3M|s\slashed{\nabla}_R^2\underline{\phi}_a\right) \left(|u|\slashed{\nabla}_u\slashed{\nabla}_R\underline{\phi}^a \right) \frac{1}{\sqrt s}\sqrt{\frac{R}{|u|}} \d u \d^2\omega \d s \cr
&&+ \int_{s_1}^{s_2}\int_{\mathcal{H}_s} 4|(1-3M)(1+uR)||\slashed{\nabla}_R^2\underline{\phi}_a|^2 \frac{1}{\sqrt s}\frac{R}{|u|} \d u \d^2\omega \d s \cr
&&+ \int_{s_1}^{s_2}\int_{\mathcal{H}_s} 2|1-6MR|\left(\slashed{\nabla}_R\underline{\phi}_a\right) \left(|u| \slashed{\nabla}_u\slashed{\nabla}_R\underline{\phi}^a \right)\frac{1}{\sqrt s} \d u \d^2\omega \d s \cr
&&+ \int_{s_1}^{s_2}\int_{\mathcal{H}_s} 4|(1-6MR)(1+uR)|\left(\slashed{\nabla}_R\underline{\phi}_a\right) \left(\slashed{\nabla}^2_R\underline{\phi}^a \right)\frac{1}{\sqrt s}\sqrt{\frac{R}{|u|}} \d u \d^2\omega \d s \cr
&&+\int_{s_1}^{s_2}\int_{\mathcal{H}_s} \frac{1}{\sqrt s}\frac{R}{|u|}|\slashed{\nabla}^2_R\underline{\phi}_a|^2\d u \d^2\omega \d s\cr
&\lesssim& \int_{s_1}^{s_2}\frac{1}{\sqrt{s}}\int_{\mathcal{H}_s}\left( \frac{R}{|u|}|\slashed{\nabla}_R^2\underline{\phi}_a|^2 + |u|^2|\slashed{\nabla}_u\slashed{\nabla}_R\underline{\phi}_a|^2\right) \d u \d^2\omega \d s \cr
&&+ \int_{s_1}^{s_2}\frac{1}{\sqrt s}\int_{\mathcal{H}_s} \frac{R}{|u|}|\slashed{\nabla}_R^2\underline{\phi}_a|^2 \d u \d^2\omega \d s \cr
&&+ \int_{s_1}^{s_2} \frac{1}{\sqrt s}\int_{\mathcal{H}_s}\left(|\slashed{\nabla}_R\underline{\phi}_a|^2 + |u|^2 \slashed{\nabla}_u\slashed{\nabla}_R\underline{\phi}_a|^2 \right) \d u \d^2\omega \d s \cr
&&+ \int_{s_1}^{s_2} \frac{1}{\sqrt s} \int_{\mathcal{H}_s} \left(|\slashed{\nabla}_R\underline{\phi}_a|^2 + \frac{R}{|u|}|\slashed{\nabla}^2_R\underline{\phi}_a|^2 \right) \d u \d^2\omega \d s \cr
&&+\int_{s_1}^{s_2}  \frac{1}{\sqrt s}\int_{\mathcal{H}_s}\frac{R}{|u|}|\slashed{\nabla}^2_R\underline{\phi}_a|^2\d u \d^2\omega \d s\cr
&\lesssim& \int_{s_1}^{s_2}\frac{1}{\sqrt s}\left(\mathcal{E}_{\mathcal{H}_s}(\underline{\phi}_a) +  \mathcal{E}_{\mathcal{H}_s}(\slashed{\nabla}_R\underline{\phi}_a) \right)\d s.
\end{eqnarray}
Note that we can remove the factor $\dfrac{1}{\sqrt s}$ by changing variable $\tau=-2(\sqrt{s}-1)$, hence $\dfrac{\d\tau}{\d s}=-\dfrac{1}{\sqrt s}$ and $\tau$ varies from $2$ to $0$, when $s$ varies from $0$ to $1$. Therefore, we have
\begin{eqnarray*}
&&\left| \mathcal{E}_{\mathscr{H}_{\tau(s_1)}}(\slashed{\nabla}_R\underline{\phi}_a) + \mathcal{E}_{\mathcal{S}^{s_1,s_2}_{u_0}}(\slashed{\nabla}_R\underline{\phi}_a) - \mathcal{E}_{\mathscr{H}_{\tau(s_2)}}(\slashed{\nabla}_R\underline{\phi}_a) \right|\cr
&\lesssim& \int_{\tau(s_1)}^{\tau(s_2)}\left( \mathcal{E}_{\mathscr{H}_{\tau(s)}}(\slashed{\nabla}_R\underline{\phi}_a) + \mathcal{E}_{\mathscr{H}_{\tau(s)}}(\underline{\phi}_a) \right)\d s,
\end{eqnarray*}
where $\mathscr{H}_{\tau(s)}=\mathcal{H}_s$ (see also for scalar wave equations \cite[Appendix A4 and A6, pages 24-25]{MaNi2009}).

Combining inequalities \eqref{first} and \eqref{zero}, we can establish that
\begin{eqnarray}\label{Gron}
&&\left| \left(\mathcal{E}_{\mathcal{H}_{s_1}}(\slashed{\nabla}_R\underline{\phi}_a)+ \mathcal{E}_{\mathcal{H}_{s_1}}(\underline{\phi}_a)\right) + \left( \mathcal{E}_{\mathcal{S}^{s_1,s_2}_{u_0}}(\slashed{\nabla}_R\underline{\phi}_a) + \mathcal{E}_{\mathcal{S}^{s_1,s_2}_{u_0}}(\underline{\phi}_a)\right) - \left( \mathcal{E}_{\mathcal{H}_{s_2}}(\slashed{\nabla}_R\underline{\phi}_a) + \mathcal{E}_{\mathcal{H}_{s_2}}(\underline{\phi}_a) \right) \right| \cr
&\lesssim& \int_{s_1}^{s_2}\frac{1}{\sqrt s}\left( \mathcal{E}_{\mathcal{H}_s}(\slashed{\nabla}_R\underline{\phi}_a)  + \mathcal{E}_{\mathcal{H}_s}(\underline{\phi}_a) \right)\d s.
\end{eqnarray}
This inequality leads to
\begin{eqnarray}
&&\mathcal{E}_{\mathcal{H}_{s_1}}(\slashed{\nabla}_R\underline{\phi}_a) + \mathcal{E}_{\mathcal{H}_{s_1}}(\underline{\phi}_a) \lesssim \mathcal{E}_{\mathcal{H}_1}(\slashed{\nabla}_R\underline{\phi}_a) + \mathcal{E}_{\mathcal{H}_1}(\underline{\phi}_a) \cr
&&\hspace{5cm}+ \int_{s_1}^1 \frac{1}{\sqrt s}\left( \mathcal{E}_{\mathcal{H}_s}(\slashed{\nabla}_R\underline{\phi}_a) + \mathcal{E}_{\mathcal{H}_s}(\underline{\phi}_a) \right) \d s, \label{00}\\
&&\mathcal{E}_{\mathcal{H}_{s_2}}(\slashed{\nabla}_R\underline{\phi}_a) + \mathcal{E}_{\mathcal{H}_{s_2}}(\underline{\phi}_a) \lesssim \mathcal{E}_{\mathcal{S}^+_{u_0}}(\underline{\phi}_a) + \mathcal{E}_{\scri^+_{u_0}}(\underline{\phi}_a) \cr
&&\hspace{5cm}+ \int_0^{s_2}\frac{1}{\sqrt s}\left( \mathcal{E}_{\mathcal{H}_s}(\slashed{\nabla}_R\underline{\phi}_a) + \mathcal{E}_{\mathcal{H}_s}(\underline{\phi}_a) \right)\d s \label{11}
\end{eqnarray}
for all $0\leq s_1<s_2\leq 1$. Since the function $\dfrac{1}{\sqrt{s}}$ is integrable on $[0,\,1]$, we use Gronwall’s inequality for inequalities \eqref{00} and \eqref{11} with the scalar function $z(s)=\mathcal{E}_{\mathcal{H}_s}(\slashed{\nabla}_R\underline{\phi}_a) + \mathcal{E}_{\mathcal{H}_s}(\underline{\phi}_a)$, and get the following result of energy estimates at first order.
\begin{theorem}
For $u_0<0$ and $|u_0|$ large enough  and for any smooth compactly supported initial data
at $\Sigma_0= \left\{ t = 0 \right\}$, the associated solutions $\underline{\phi}_a$ of \eqref{ReFac02}
satisfying that
$$\mathcal{E}_{\scri^+_{u_0}}(\slashed{\nabla}_R\underline{\phi}_a) + \mathcal{E}_{\scri^+_{u_0}}(\underline{\phi}_a) \lesssim \mathcal{E}_{\mathcal{H}_1}(\slashed{\nabla}_R\underline{\phi}_a) + \mathcal{E}_{\mathcal{H}_1}(\underline{\phi}_a),$$
$$\mathcal{E}_{\mathcal{H}_1}(\slashed{\nabla}_R\underline{\phi}_a) + \mathcal{E}_{\mathcal{H}_1}(\underline{\phi}_a) \lesssim \mathcal{E}_{\scri^+_{u_0}}(\slashed{\nabla}_R\underline{\phi}_a) + \mathcal{E}_{\scri^+_{u_0}}(\underline{\phi}_a) + + \mathcal{E}_{{\mathcal{S}^+_{u_0}}}(\slashed{\nabla}_R\underline{\phi}_a) + \mathcal{E}_{{\mathcal{S}^+_{u_0}}}(\underline{\phi}_a).$$
\end{theorem}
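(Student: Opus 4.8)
The plan is to run exactly the Gronwall argument used for Theorem \ref{P01}, now applied to the coupled pair of integral inequalities \eqref{00} and \eqref{11} rather than to \eqref{0} and \eqref{1}. The substantive analytic work — bounding the error terms produced by the first-order approximate conservation law \eqref{order1} and absorbing them into the zero- and first-order energies via the equivalence \eqref{BasicEq} — is already carried out in \eqref{first} and packaged, together with \eqref{zero}, into \eqref{Gron}. What remains is therefore a purely Gronwall-type closure. I would set
\begin{equation*}
z(s) \; := \; \mathcal{E}_{\mathcal{H}_s}(\slashed{\nabla}_R\underline{\phi}_a) + \mathcal{E}_{\mathcal{H}_s}(\underline{\phi}_a),
\end{equation*}
a nonnegative continuous function of $s \in (0,1]$, controlled via Lemma \ref{Energyfluxes} by the integrands appearing in \eqref{first}.

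For the first estimate I would read \eqref{00} as the reverse integral inequality $z(s_1) \lesssim z(1) + \int_{s_1}^1 s^{-1/2} z(s)\,\d s$, valid for every $s_1 \in (0,1]$. Since $s^{-1/2}$ is integrable on $[0,1]$ with $\int_{s_1}^1 s^{-1/2}\,\d s = 2(1-\sqrt{s_1}) \le 2$, the backward Gronwall lemma gives $z(s_1) \lesssim z(1)\,\exp\!\big(\int_{s_1}^1 s^{-1/2}\,\d s\big) \lesssim z(1)$, a bound \emph{uniform} in $s_1$. Equivalently, the change of variables $\tau = -2(\sqrt{s}-1)$ noted after \eqref{first} turns $s^{-1/2}\,\d s$ into $-\d\tau$ and reduces \eqref{00} to a constant-kernel Gronwall inequality on $\tau \in [0,2]$, to which the classical lemma applies verbatim. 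Letting $s_1 \downarrow 0$ and using that on $\mathcal{H}_s$ one has $s \simeq -uR$ (Lemma \ref{epsilonestimates}), so that the weight $R/|u|$ multiplying $|\slashed{\nabla}_R\underline{\phi}_a|^2$ and $|\slashed{\nabla}_R^2\underline{\phi}_a|^2$ degenerates to $0$ at $\scri^+$, I would identify $\lim_{s\to 0} z(s) = \mathcal{E}_{\scri^+_{u_0}}(\slashed{\nabla}_R\underline{\phi}_a) + \mathcal{E}_{\scri^+_{u_0}}(\underline{\phi}_a)$, which yields the first inequality of the theorem.

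The second estimate is symmetric: \eqref{11} reads $z(s_2) \lesssim B + \int_0^{s_2} s^{-1/2} z(s)\,\d s$ with $B = \mathcal{E}_{\scri^+_{u_0}}(\slashed{\nabla}_R\underline{\phi}_a) + \mathcal{E}_{\scri^+_{u_0}}(\underline{\phi}_a) + \mathcal{E}_{\mathcal{S}^+_{u_0}}(\slashed{\nabla}_R\underline{\phi}_a) + \mathcal{E}_{\mathcal{S}^+_{u_0}}(\underline{\phi}_a)$, the full boundary contribution obtained by integrating \eqref{Gron} over $\Omega^{0,s_2}_{u_0}$ and letting the lower slice tend to $\scri^+_{u_0}$ (so that both the zero- and first-order fluxes on $\scri^+_{u_0}$ and on $\mathcal{S}^+_{u_0}$ appear). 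The same, now forward, Gronwall lemma gives $z(s_2) \lesssim B$ uniformly in $s_2$; specializing to $s_2 = 1$, where $\mathcal{H}_1 \subset \Sigma_0$, produces the second inequality.

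I expect no genuine obstacle at this final stage, precisely because the hard part has already been discharged upstream in deriving \eqref{first} and \eqref{Gron}; the only point requiring care is the passage $s \to 0$, where $z(s)$ must be shown to converge to the corresponding $\scri^+_{u_0}$ energies. This is exactly where the structure of the fluxes in Lemma \ref{Energyfluxes} matters — the $\scri^+$ flux is the limit of the $\mathcal{H}_s$ flux with the $R/|u|$-weighted $\slashed{\nabla}_R$-terms dropped — while Lemma \ref{epsilonestimates} provides the uniform control ($0 < |u|R < 1+\varepsilon$, $1 < r_*R < 1+\varepsilon$) needed to pass to the limit inside the energies. With this identification, the Gronwall bounds, being uniform in $s_1$ and $s_2$, extend continuously to the endpoints and close the argument.
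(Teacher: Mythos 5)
Your proposal is correct and follows essentially the same route as the paper: the paper likewise combines \eqref{first} with \eqref{zero} into \eqref{Gron}, deduces \eqref{00} and \eqref{11}, and applies Gronwall's inequality to the scalar function $z(s)=\mathcal{E}_{\mathcal{H}_s}(\slashed{\nabla}_R\underline{\phi}_a)+\mathcal{E}_{\mathcal{H}_s}(\underline{\phi}_a)$, using the integrability of $1/\sqrt{s}$ on $[0,1]$ (equivalently the substitution $\tau=-2(\sqrt{s}-1)$). Your refinements are compatible with, and slightly more careful than, the paper's treatment: you make explicit the limit $s_1\downarrow 0$ identifying $z(s)$ with the $\scri^+_{u_0}$ fluxes via Lemma \ref{Energyfluxes}, and your boundary term $B$ correctly carries the first-order fluxes $\mathcal{E}_{\scri^+_{u_0}}(\slashed{\nabla}_R\underline{\phi}_a)+\mathcal{E}_{\mathcal{S}^+_{u_0}}(\slashed{\nabla}_R\underline{\phi}_a)$, which the right-hand side of the paper's \eqref{11} omits (evidently a typo, since the theorem as stated requires them).
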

By the same way we have the higher order estimates for $\slashed{\nabla}_R\underline{\phi}_a$ in the following theorem.
\begin{theorem}\label{P2}
For $u_0<0$ and $|u_0|$ large enough  and for any smooth compactly supported initial data
at $\Sigma_0= \left\{ t = 0 \right\}$, the associated solutions $\underline{\phi}_a$ of \eqref{ReFac02}
satisfying that
$$\sum_{p=0}^k\mathcal{E}_{\scri^+_{u_0}}(\slashed{\nabla}_R^p\underline{\phi}_a) \lesssim \sum_{p=0}^k\mathcal{E}_{\mathcal{H}_1}(\slashed{\nabla}_R^p\underline{\phi}_a),$$
$$\sum_{p=0}^k\mathcal{E}_{\mathcal{H}_1}(\slashed{\nabla}_R^p\underline{\phi}_a) \lesssim \sum_{p=0}^k \left( \mathcal{E}_{\scri^+_{u_0}}(\slashed{\nabla}^p_R\underline{\phi}_a) + \mathcal{E}_{{\mathcal{S}^+_{u_0}}}(\slashed{\nabla}^p_R\underline{\phi}_a) \right)$$
for all $k\in \mathbb{N}$.
\end{theorem}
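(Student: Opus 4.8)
The plan is to extend the first-order scheme culminating in inequality \eqref{Gron} to arbitrary order by working with the single combined energy
\begin{equation*}
Y_k(s) := \sum_{p=0}^k \mathcal{E}_{\mathcal{H}_s}(\slashed{\nabla}_R^p\underline{\phi}_a),
\end{equation*}
deriving a differential inequality for $Y_k$ of the same shape as \eqref{0}--\eqref{1} and \eqref{00}--\eqref{11}, and then closing with Gronwall's inequality. The cases $k=0$ (Theorem \ref{P01}) and $k=1$ (inequality \eqref{Gron}) are the model; for general $k$ I would produce the approximate conservation law for each $\slashed{\nabla}_R^p\underline{\phi}_a$, $0\le p\le k$, show that each right-hand side is controlled by $\tfrac{1}{\sqrt s}\,Y_k(s)$, and sum.

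First I would commute $\slashed{\nabla}_R^k$ through equation \eqref{ReFac02}. In the coordinates $(u,R,\theta,\varphi)$ the operator $\slashed{\nabla}_R$ commutes with $\slashed{\nabla}_u$ and with $\slashed{\Delta}_{\mathbb{S}^2}$, so the only commutator comes from the radial block $\slashed{\nabla}_R R^2(1-2MR)\slashed{\nabla}_R$. Writing $P(R)=R^2(1-2MR)$, a cubic polynomial, the Leibniz rule gives
\begin{equation*}
[\slashed{\nabla}_R^k,\, \slashed{\nabla}_R P\slashed{\nabla}_R]\underline{\phi}_a = 2kR(1-3MR)\slashed{\nabla}_R^{k+1}\underline{\phi}_a + c_k(R)\slashed{\nabla}_R^{k}\underline{\phi}_a + d_k(R)\slashed{\nabla}_R^{k-1}\underline{\phi}_a,
\end{equation*}
with $c_k,d_k$ polynomials in $R$ depending on $k$ and $M$. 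Thus $\slashed{\Box}_{\hat g}\slashed{\nabla}_R^k\underline{\phi}_a + \slashed{\nabla}_R^k\underline{\phi}_a$ equals minus this expression, extending \eqref{order1} to all orders. The decisive structural point, already visible at first order in \eqref{order1}, is that the top-order term $\slashed{\nabla}_R^{k+1}\underline{\phi}_a$ carries the extra factor $R$.

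Next I would set $J_c(\slashed{\nabla}_R^k\underline{\phi}_a)=T^d\mathbb{T}_{cd}(\slashed{\nabla}_R^k\underline{\phi}_a)$, form $\slashed{\nabla}^c J_c(\slashed{\nabla}_R^k\underline{\phi}_a)$ exactly as in \eqref{order1}, and integrate over $\Omega_{u_0}^{s_1,s_2}$ of \eqref{Omega}. On each slice the volume element contributes the weight $|u|^{-1}$, and using Lemma \ref{epsilonestimates}, the identity $s\simeq -uR$, the equivalence \eqref{BasicEq}, and Cauchy--Schwarz, every product in the source is bounded by energies in $Y_k$. The key accounting is that the factor $R$ multiplying $\slashed{\nabla}_R^{k+1}\underline{\phi}_a$ converts, after the weight $|u|^{-1}$ and \eqref{BasicEq}, into precisely the $\tfrac{R}{|u|}|\slashed{\nabla}_R^{k+1}\underline{\phi}_a|^2$ term appearing in $\mathcal{E}_{\mathcal{H}_s}(\slashed{\nabla}_R^k\underline{\phi}_a)$ of Lemma \ref{Energyfluxes}, while the lower-order terms fall into $\mathcal{E}_{\mathcal{H}_s}(\slashed{\nabla}_R^{p}\underline{\phi}_a)$ with $p\le k$. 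This yields, for each $p\le k$,
\begin{equation*}
\left| \mathcal{E}_{\mathcal{H}_{s_1}}(\slashed{\nabla}_R^p\underline{\phi}_a) + \mathcal{E}_{\mathcal{S}^{s_1,s_2}_{u_0}}(\slashed{\nabla}_R^p\underline{\phi}_a) - \mathcal{E}_{\mathcal{H}_{s_2}}(\slashed{\nabla}_R^p\underline{\phi}_a) \right| \lesssim \int_{s_1}^{s_2}\frac{1}{\sqrt s}\,Y_k(s)\,\d s.
\end{equation*}

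Summing over $p=0,\dots,k$ produces the exact analogue of \eqref{Gron} for $Y_k$: a bound on $\big| Y_k(s_1) + \sum_{p=0}^k\mathcal{E}_{\mathcal{S}^{s_1,s_2}_{u_0}}(\slashed{\nabla}_R^p\underline{\phi}_a) - Y_k(s_2)\big|$ by $\int_{s_1}^{s_2}\tfrac{1}{\sqrt s}Y_k(s)\,\d s$. Taking $s_2=1$ gives $Y_k(s_1)\lesssim Y_k(1)+\int_{s_1}^1\tfrac{1}{\sqrt s}Y_k(s)\d s$, and moving the fluxes on $\scri^+_{u_0}=\mathcal{H}_0$ and $\mathcal{S}^+_{u_0}$ to the right gives the reverse configuration $Y_k(s_2)\lesssim \sum_{p=0}^k\big(\mathcal{E}_{\scri^+_{u_0}}(\slashed{\nabla}_R^p\underline{\phi}_a)+\mathcal{E}_{\mathcal{S}^+_{u_0}}(\slashed{\nabla}_R^p\underline{\phi}_a)\big)+\int_0^{s_2}\tfrac{1}{\sqrt s}Y_k(s)\d s$. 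Since $s^{-1/2}$ is integrable on $[0,1]$, Gronwall's inequality applied to $Y_k$ in each configuration produces the two claimed two-sided estimates. I expect the main obstacle to be purely the bookkeeping of the commutator $[\slashed{\nabla}_R^k,\slashed{\nabla}_R P\slashed{\nabla}_R]$: one must check that the power of $R$ in the coefficient of \emph{every} term it generates is high enough, after redistribution via \eqref{BasicEq} and Cauchy--Schwarz, to be absorbed into $Y_k(s)$ with an integrable $s^{-1/2}$ prefactor and with no uncontrolled $\slashed{\nabla}_R^{k+1}$ contribution surviving. Once the top-order factor $R$ is identified, this reduces to the weight count already carried out at orders zero and one.
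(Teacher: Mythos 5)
Your proposal is correct and takes essentially the same approach as the paper: the paper proves the case $k=1$ in detail (inequalities \eqref{first}, \eqref{Gron}, \eqref{00}, \eqref{11}) and obtains Theorem \ref{P2} ``by the same way,'' which is precisely your scheme of commuting $\slashed{\nabla}_R^k$ through \eqref{ReFac02}, bounding all source terms by $\tfrac{1}{\sqrt s}\sum_{p=0}^k\mathcal{E}_{\mathcal{H}_s}(\slashed{\nabla}_R^p\underline{\phi}_a)$ via Lemma \ref{epsilonestimates}, \eqref{BasicEq} and Cauchy--Schwarz, and then applying Gronwall to the summed energy $Y_k(s)$. Your explicit commutator bookkeeping, identifying the factor $R$ on the top-order term $\slashed{\nabla}_R^{k+1}\underline{\phi}_a$ as the decisive structural feature, correctly fills in the step the paper leaves implicit.
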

Combining the two theorems \ref{P1} and \ref{P2} we obtain the two-side estimates of the energies for all covariant derivatives of tensorial fields.
\begin{theorem}\label{PeelingFac}
For $u_0<0$ and $|u_0|$ large enough  and for any smooth compactly supported initial data
at $\Sigma_0= \left\{ t = 0 \right\}$, the associated solutions $\underline{\phi}_a$ of \eqref{ReFac02}
satisfying that
$$\sum_{m+n+p\leq k}\mathcal{E}_{\scri^+_{u_0}}(\slashed{\nabla}_u^m\slashed{\nabla}_R^n\slashed{\nabla}_{\mathbb{S}^2}^p\underline{\phi}_a) \lesssim \sum_{m+n+p\leq k}\mathcal{E}_{\mathcal{H}_1}(\slashed{\nabla}_u^m\slashed{\nabla}_R^n\slashed{\nabla}_{\mathbb{S}^2}^p\underline{\phi}_a),$$
$$\sum_{m+n+p\leq k}\mathcal{E}_{\mathcal{H}_1}(\slashed{\nabla}_u^m\slashed{\nabla}_R^n\slashed{\nabla}_{\mathbb{S}^2}^p\underline{\phi}_a) \lesssim \sum_{m+n+p\leq k}\mathcal{E}_{\scri^+_{u_0}}(\slashed{\nabla}_u^m\slashed{\nabla}_R^n\slashed{\nabla}_{\mathbb{S}^2}^p\underline{\phi}_a) + \mathcal{E}_{{\mathcal{S}^+_{u_0}}}(\slashed{\nabla}_u^m\slashed{\nabla}_R^n\slashed{\nabla}_{\mathbb{S}^2}^p\underline{\phi}_a),$$
where $k,m,n,p\in \mathbb{N}$.
\end{theorem}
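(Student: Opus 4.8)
The plan is to combine the two commutation mechanisms already isolated in Theorems \ref{P1} and \ref{P2}, treating the $\slashed{\nabla}_u$ and $\slashed{\nabla}_{\mathbb{S}^2}$ derivatives as \emph{clean} commutators and the $\slashed{\nabla}_R$ derivatives as the only source of (controllable) error terms, and then to close everything with a single Gronwall argument applied to the full order-$k$ energy.

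First I would fix a multi-index $(m,n,p)$ with $m+n+p\leq k$ and examine the equation satisfied by $\slashed{\nabla}_u^m\slashed{\nabla}_R^n\slashed{\nabla}_{\mathbb{S}^2}^p\underline{\phi}_a$. Because $\slashed{\nabla}_u$ and $\slashed{\nabla}_{\mathbb{S}^2}$ commute with $\slashed{\Box}_{\hat g}$ up to the potential term, exactly as used to justify Theorem \ref{P1}, applying $\slashed{\nabla}_u^m\slashed{\nabla}_{\mathbb{S}^2}^p$ to \eqref{ReFac02} leaves the Klein--Gordon form intact: the field $\psi := \slashed{\nabla}_u^m\slashed{\nabla}_{\mathbb{S}^2}^p\underline{\phi}_a$ again solves $\slashed{\Box}_{\hat g}\psi + \psi = 0$. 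The only genuine error terms then come from subsequently commuting $\slashed{\nabla}_R^n$, which, just as in the passage from \eqref{ReFac02} to \eqref{order1}, produces a right-hand side built from $R\,\slashed{\nabla}_R^{n+1}\psi$ and lower-order $R$-derivatives with smooth bounded coefficients.

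Next I would form the current $J_c(\slashed{\nabla}_u^m\slashed{\nabla}_R^n\slashed{\nabla}_{\mathbb{S}^2}^p\underline{\phi}_a)$ and integrate its approximate conservation law over $\Omega_{u_0}^{s_1,s_2}$, following \eqref{zero}--\eqref{Gron} verbatim. The key structural fact, inherited from the first-order computation \eqref{first}, is that every error term carrying one extra $R$-derivative comes multiplied by a factor $R$, and on $\mathcal{H}_s$ one has $R\simeq s/|u|$ together with the equivalence \eqref{BasicEq}, $\tfrac{1}{|u|}\simeq \tfrac{1}{\sqrt s}\sqrt{\tfrac{R}{|u|}}$. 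Consequently the dangerous top term $\tfrac{R}{|u|}|\slashed{\nabla}_R^{n+1}\psi|^2$ is \emph{not} of order $k+1$: by Lemma \ref{Energyfluxes} it is precisely the weighted density already contained in $\mathcal{E}_{\mathcal{H}_s}(\slashed{\nabla}_u^m\slashed{\nabla}_R^n\slashed{\nabla}_{\mathbb{S}^2}^p\underline{\phi}_a)$. After Cauchy--Schwarz, every error integrand is bounded by $\tfrac{1}{\sqrt s}$ times order-$k$ energy densities, so summing over all $(m,n,p)$ with $m+n+p\leq k$ yields, with $Y(s):=\sum_{m+n+p\leq k}\mathcal{E}_{\mathcal{H}_s}(\slashed{\nabla}_u^m\slashed{\nabla}_R^n\slashed{\nabla}_{\mathbb{S}^2}^p\underline{\phi}_a)$,
\begin{equation*}
\Big| Y(s_1) + \sum_{m+n+p\leq k}\mathcal{E}_{\mathcal{S}_{u_0}^{s_1,s_2}}(\slashed{\nabla}_u^m\slashed{\nabla}_R^n\slashed{\nabla}_{\mathbb{S}^2}^p\underline{\phi}_a) - Y(s_2)\Big| \lesssim \int_{s_1}^{s_2}\frac{1}{\sqrt s}\,Y(s)\,\d s,
\end{equation*}
the order-$k$ analogue of \eqref{Gron}. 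Since $1/\sqrt s$ is integrable on $[0,1]$, Gronwall's inequality applied in the two directions $s_1\to 1$ and $s_2\to 0$, as in \eqref{00}--\eqref{11}, delivers the two claimed two-sided estimates, with $\mathcal{H}_1\subset\Sigma_0$ and $\mathcal{H}_0=\scri^+_{u_0}$.

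The main obstacle will be the bookkeeping of the iterated $R$-commutators, since here $\slashed{\nabla}_R$ must be pushed through the error terms created by the earlier $\slashed{\nabla}_R$'s. The single-step identity behind \eqref{order1} reads $[\slashed{\nabla}_R,\slashed{\Box}_{\hat g}]\,\psi = -2R(1-3MR)\slashed{\nabla}_R^2\psi + (\text{lower order in }\slashed{\nabla}_R)$, so each commutation raises the $R$-differential order by one but pays for it with one explicit power of $R$. I would prove by induction on $n$ that commuting $\slashed{\nabla}_R^n$ into the equation for $\psi$ produces a finite sum of terms $R^{\,j}\,c(R)\,\slashed{\nabla}_R^{n'}\psi$ with $c$ smooth and bounded, $n'\leq n+1$, and $j\geq n'-n$. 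Granting this combinatorial claim, the only term reaching differential order $k+1$ is $\slashed{\nabla}_R^{n+1}$ with prefactor $R$, whose contribution is absorbed by $\tfrac{1}{\sqrt s}$ times a density already counted in the order-$k$ energy; no net loss of derivatives occurs, $Y(s)$ closes the Gronwall scheme uniformly in $k$, and the theorem follows.
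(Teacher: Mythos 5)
Your proposal is correct and takes essentially the same route as the paper: you combine the exact commutation of $\slashed{\nabla}_u^m\slashed{\nabla}_{\mathbb{S}^2}^p$ with the equation (the mechanism behind Theorem \ref{P1}) with the $\slashed{\nabla}_R^n$-commutation, whose top-order error carries an explicit factor of $R$ and is absorbed via \eqref{BasicEq} into the order-$k$ energy exactly as in \eqref{first}--\eqref{Gron} (the mechanism behind Theorem \ref{P2}), and then close with Gronwall on the summed energy $Y(s)$ using the integrability of $1/\sqrt{s}$. Your induction on the structure of the iterated $R$-commutators merely makes explicit the bookkeeping that the paper compresses into the one-line proof ``combining Theorems \ref{P1} and \ref{P2}.''
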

Now we give the definition of the peeling at order $k \in \mathbb{N}$:
\begin{definition}
A solution $\underline{\phi}_a$ of the tensorial Fackerell equation \eqref{ReFac02} in $\Omega_{u_0}^+$
peels at order $k\in \mathbb{N}$ if $\underline{\phi}_a$ satisfies
$$\sum_{m+n+p\leq k}\mathcal{E}_{\scri^+_{u_0}}(\slashed{\nabla}_u^m\slashed{\nabla}_R^n\slashed{\nabla}_{\mathbb{S}^2}^p\underline{\phi}_a)<+\infty.$$
\end{definition}
Theorem \ref{PeelingFac} gives us a characterization of the class of initial data on $\mathcal{H}_1$ that guarantees that the corresponding solution peels at a given order $k\in \mathbb{N}$; it is the completion of smooth compactly supported data on $\mathcal{H}_1$ in the norm
$$\left( \sum_{m+n+p\leq k}\mathcal{E}_{\mathcal{H}_1}(\slashed{\nabla}_u^m\slashed{\nabla}_R^n\slashed{\nabla}_{\mathbb{S}^2}^p\underline{\phi}_a) \right)^{1/2}.$$

\subsection{Peeling for Teukolsky equations}\label{PeeTeu}
Integrating the approximate conservation law \eqref{zero2} on the domain $\Omega_{u_0}^{s_1,s_2}$ given by \eqref{Omega} we obtain
\begin{eqnarray}\label{order00}
&&\left| \mathcal{E}_{\mathcal{H}_{s_1}}(\widehat{\underline{\alpha}}_a) + \mathcal{E}_{\mathcal{S}^{s_1,s_2}_{u_0}}(\widehat{\underline{\alpha}}_a) - \mathcal{E}_{\mathcal{H}_{s_2}}(\widehat{\underline{\alpha}}_a) \right|\cr
&\simeq& \left|  \int_{s_1}^{s_2}\int_{\mathcal{H}_s} -2R(1-3MR)\slashed{\nabla}_R\widehat{\underline\alpha}_a \left( u^2\slashed{\nabla}_u \widehat{\underline{\alpha}}^a -2(1+uR)\slashed{\nabla}_R\widehat{\underline{\alpha}}^a \right) \frac{1}{|u|}\d u \d^2\omega \d s\right|\cr
&&+\left| \int_{s_1}^{s_2}\int_{\mathcal{H}_s} 4MR^2(3+uR)|\slashed{\nabla}_R\widehat{\underline{\alpha}}_a|^2 \frac{1}{|u|}\d u \d^2\omega \d s \right|\cr
&\lesssim& \int_{s_1}^{s_2}\int_{\mathcal{H}_s} 2R|1-3MR| \left(\slashed{\nabla}_R\widehat{\underline\alpha}_a\right) \left(|u|\slashed{\nabla}_u \widehat{\underline{\alpha}}^a \right) \d u \d^2\omega \d s\cr
&&+ \int_{s_1}^{s_2}\int_{\mathcal{H}_s} 4|(1-3MR)(1+uR)| |\slashed{\nabla}_R\widehat{\underline\alpha}_a|^2 \frac{R}{|u|} \d u \d^2\omega \d s\cr
&&+\int_{s_1}^{s_2}\int_{\mathcal{H}_s} 4MR^2(3+uR)|\slashed{\nabla}_R\widehat{\underline{\alpha}}_a|^2 \frac{1}{|u|}\d u \d^2\omega \d s\cr
&\lesssim& \int_{s_1}^{s_2}\frac{1}{\sqrt s}\int_{\mathcal{H}_s} \left( R^2|\slashed{\nabla}_R\widehat{\underline\alpha}_a|^2 +  |u|^2|\slashed{\nabla}_u \widehat{\underline{\alpha}}^a|^2 \right) \d u \d^2\omega \d s\cr
&&+ \int_{s_1}^{s_2}\frac{1}{\sqrt s}\int_{\mathcal{H}_s} \frac{R}{|u|} |\slashed{\nabla}_R\widehat{\underline\alpha}_a|^2 \d u \d^2\omega \d s\cr
&&+\int_{s_1}^{s_2}\frac{1}{\sqrt s} \int_{\mathcal{H}_s} \frac{R}{|u|}|\slashed{\nabla}_R\widehat{\underline{\alpha}}_a|^2\d u \d^2\omega \d s\cr
&\lesssim& \int_{s_1}^{s_2}\frac{1}{\sqrt s}\int_{\mathcal{H}_s} \left( \frac{sR}{|u|}|\slashed{\nabla}_R\widehat{\underline\alpha}_a|^2 +  |u|^2|\slashed{\nabla}_u \widehat{\underline{\alpha}}^a|^2 \right) \d u \d^2\omega \d s\cr
&&+\int_{s_1}^{s_2}\frac{1}{\sqrt s}\mathcal{E}_{\mathcal{H}_s}(\widehat{\underline{\alpha}}_a) \d s\cr
&\lesssim& \int_{s_1}^{s_2}\frac{1}{\sqrt s}\mathcal{E}_{\mathcal{H}_s}(\widehat{\underline{\alpha}}_a) \d s.
\end{eqnarray}
Similar to the energy estimate for the Fackerell-Ipser equation at zero order, the factor $\dfrac{1}{\sqrt{s}}$ is presented to serve the higher order estimates.

Since the function $\dfrac{1}{\sqrt{s}}$ is integrable on $[0,\,1]$, Gronwall’s inequality entails the following result for energy estimates at zero order (by the same way in Subsection \ref{Sub1}).
\begin{theorem}\label{P011}
For $u_0<0$ and $|u_0|$ large enough  and for any smooth compactly supported initial data
at $\Sigma_0= \left\{ t = 0 \right\}$, the associated solutions $\underline{\phi}_a$ of \eqref{ReFac02}
satisfying that
$$\mathcal{E}_{\scri^+_{u_0}}(\widehat{\underline{\alpha}}_a) \lesssim \mathcal{E}_{\mathcal{H}_1}(\widehat{\underline{\alpha}}_a),$$
$$\mathcal{E}_{\mathcal{H}_1}(\widehat{\underline{\alpha}}_a) \lesssim \mathcal{E}_{\scri^+_{u_0}}(\widehat{\underline{\alpha}}_a) + \mathcal{E}_{\mathcal{H}_{\mathcal{S}^+_{u_0}}}(\widehat{\underline{\alpha}}_a).$$
\end{theorem}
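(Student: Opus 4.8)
The proof runs exactly parallel to the zero-order Fackerell-Ipser estimate of Theorem \ref{P01}; the only genuinely new feature is the first-order Teukolsky term, and this has already been absorbed in the derivation of \eqref{order00}. The plan is therefore to start from the chain of inequalities \eqref{order00}, extract from it two one-sided integral inequalities, and close the argument with Gronwall's lemma applied along the foliation $\left\{ \mathcal{H}_s \right\}_{0\leq s\leq 1}$.

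First I would specialize \eqref{order00} to the two extreme leaves of the foliation. Since the energy fluxes are nonnegative (by Lemma \ref{Energyfluxes} each $\mathcal{E}_{\mathcal{H}_s}(\widehat{\underline{\alpha}}_a)$ is equivalent to a sum of squares, and the flux through the null boundary $\mathcal{S}_{u_0}$ is likewise nonnegative), I may drop the absolute value in \eqref{order00} and read off signed bounds. Taking $s_2=1$ and recalling that $\mathcal{H}_1$ is the part of $\Sigma_0$ inside $\Omega_{u_0}^+$ gives
\begin{equation*}
\mathcal{E}_{\mathcal{H}_{s_1}}(\widehat{\underline{\alpha}}_a) \lesssim \mathcal{E}_{\mathcal{H}_1}(\widehat{\underline{\alpha}}_a) + \int_{s_1}^{1}\frac{1}{\sqrt s}\,\mathcal{E}_{\mathcal{H}_s}(\widehat{\underline{\alpha}}_a)\,\d s,
\end{equation*}
while taking $s_1=0$ and recalling $\mathcal{H}_0 = \scri^+_{u_0}$, together with the flux across the null hypersurface $\mathcal{S}_{u_0}$, gives
\begin{equation*}
\mathcal{E}_{\mathcal{H}_{s_2}}(\widehat{\underline{\alpha}}_a) \lesssim \mathcal{E}_{\scri^+_{u_0}}(\widehat{\underline{\alpha}}_a) + \mathcal{E}_{\mathcal{S}^+_{u_0}}(\widehat{\underline{\alpha}}_a) + \int_0^{s_2}\frac{1}{\sqrt s}\,\mathcal{E}_{\mathcal{H}_s}(\widehat{\underline{\alpha}}_a)\,\d s,
\end{equation*}
both valid for all $0\leq s_1<s_2\leq 1$.

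Next I would apply Gronwall's inequality to the scalar function $y(s) = \mathcal{E}_{\mathcal{H}_s}(\widehat{\underline{\alpha}}_a)$, using that the kernel $1/\sqrt s$ is integrable on $[0,1]$ with $\int_0^1 s^{-1/2}\,\d s = 2$. From the first inequality, Gronwall yields $\mathcal{E}_{\mathcal{H}_{s_1}}(\widehat{\underline{\alpha}}_a) \lesssim \mathcal{E}_{\mathcal{H}_1}(\widehat{\underline{\alpha}}_a)$ uniformly in $s_1$; letting $s_1\to 0$ and using $\mathcal{E}_{\mathcal{H}_{s_1}} \to \mathcal{E}_{\scri^+_{u_0}}$ gives the first claimed bound. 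Symmetrically, the second inequality combined with Gronwall and $s_2\to 1$ produces the reverse bound $\mathcal{E}_{\mathcal{H}_1}(\widehat{\underline{\alpha}}_a) \lesssim \mathcal{E}_{\scri^+_{u_0}}(\widehat{\underline{\alpha}}_a) + \mathcal{E}_{\mathcal{S}^+_{u_0}}(\widehat{\underline{\alpha}}_a)$, which is exactly the two-sided estimate of the statement.

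The one delicate point, and the reason \eqref{order00} takes the shape it does, is the control of the extra term $-2R(1-3MR)\slashed{\nabla}_R\widehat{\underline{\alpha}}_a\,\slashed{\nabla}_T\widehat{\underline{\alpha}}^a$ coming from the Teukolsky potential in \eqref{zero2}, which is absent for the genuine Klein-Gordon / Fackerell-Ipser case. I expect this to be the main obstacle: one must split the products by Young's inequality and then exploit the relation $s\simeq -uR$ on $\mathcal{H}_s$ together with \eqref{BasicEq}, namely $\tfrac{1}{|u|}\simeq \tfrac{1}{\sqrt s}\sqrt{R/|u|}$, so that a dangerous factor such as $R^2|\slashed{\nabla}_R\widehat{\underline{\alpha}}_a|^2$ is recast as $\tfrac{sR}{|u|}|\slashed{\nabla}_R\widehat{\underline{\alpha}}_a|^2 \lesssim \tfrac{R}{|u|}|\slashed{\nabla}_R\widehat{\underline{\alpha}}_a|^2$ and thereby absorbed into $\mathcal{E}_{\mathcal{H}_s}(\widehat{\underline{\alpha}}_a)$ via Lemma \ref{Energyfluxes}. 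Once this absorption is secured, the integrability of $1/\sqrt s$ makes the Gronwall step routine, and the two one-sided estimates assemble into the asserted equivalence.
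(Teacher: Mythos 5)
Your proposal is correct and follows essentially the same route as the paper's own proof: the paper likewise takes the already-derived integral inequality \eqref{order00} (in which the extra Teukolsky term has been absorbed using Young's inequality and the relation $s\simeq -uR$ via \eqref{BasicEq}), extracts the two one-sided estimates by dropping the absolute value and specializing the leaves, and closes with Gronwall's inequality using the integrability of $1/\sqrt{s}$ on $[0,1]$, exactly as in Subsection \ref{Sub1}. No gaps.
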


Since the approximate conservation law \eqref{zero2} is valid for $\slashed{\nabla}_u^k\widehat{\underline{\alpha}}_a$ and $\slashed{\nabla}_{\mathbb{S}^2}^k\widehat{\underline{\alpha}}_a$ with $k\in \mathbb{N}$, we have the following theorem.
\begin{theorem}\label{P11}
For $u_0<0$ and $|u_0|$ large enough  and for any smooth compactly supported initial data
at $\Sigma_0= \left\{ t = 0 \right\}$, the associated solutions $\underline{\phi}_a$ of \eqref{ReFac02}
satisfying that
$$\mathcal{E}_{\scri^+_{u_0}}(\slashed{\nabla}_u^k\slashed{\nabla}_{\mathbb{S}^2}^l\widehat{\underline{\alpha}}_a) \lesssim \mathcal{E}_{\mathcal{H}_1}(\slashed{\nabla}_u^k\slashed{\nabla}_{\mathbb{S}^2}^l\widehat{\underline{\alpha}}_a),$$
$$\mathcal{E}_{\mathcal{H}_1}(\slashed{\nabla}_u^k\slashed{\nabla}_{\mathbb{S}^2}^l\widehat{\underline{\alpha}}_a) \lesssim \mathcal{E}_{\scri^+_{u_0}}(\slashed{\nabla}_u^k\slashed{\nabla}_{\mathbb{S}^2}^l\widehat{\underline{\alpha}}_a) + \mathcal{E}_{\mathcal{H}_{\mathcal{S}^+_{u_0}}}(\slashed{\nabla}_u^k\slashed{\nabla}_{\mathbb{S}^2}^l\widehat{\underline{\alpha}}_a),$$
where $k,l \in \mathbb{N}$.
\end{theorem}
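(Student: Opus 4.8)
The plan is to deduce Theorem \ref{P11} from the zero-order estimate of Theorem \ref{P011} by commuting the equation with the two exact symmetry operators of the rescaled Teukolsky equation \eqref{ReTeu2}, namely translation in $u$ and rotation of the $2$-sphere. First I would set $w := \slashed{\nabla}_u^k\slashed{\nabla}_{\mathbb{S}^2}^l\widehat{\underline{\alpha}}_a$. Since all coefficients of \eqref{ReTeu2} depend only on $R$ and are spherically symmetric, the operators $\slashed{\nabla}_u$ and $\slashed{\nabla}_{\mathbb{S}^2}$ commute with the rescaled Teukolsky operator; this is exactly the content of the commutation relations recorded in Section \ref{Basic}. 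Consequently $w$ solves the same equation
$$\slashed{\Box}_{\hat g} w + 2R(1-3MR)\slashed{\nabla}_R w + w = 0,$$
and, $\widehat{\underline{\alpha}}_a$ being smooth and compactly supported, so is $w$, with finite fluxes across every leaf of the foliation.

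With this reduction the entire zero-order machinery transfers verbatim. I would form the current $J_c(w) = T^d\mathbb{T}_{cd}(w)$ and note that it obeys the approximate conservation law \eqref{zero2} with $\widehat{\underline{\alpha}}_a$ replaced by $w$; the equivalent expressions for $\mathcal{E}_{\mathcal{H}_s}(w)$ and $\mathcal{E}_{\scri^+_{u_0}}(w)$ follow from Lemma \ref{Energyfluxes}, which applies to any solution of the Teukolsky equation. Integrating this conservation law over $\Omega_{u_0}^{s_1,s_2}$ and repeating the chain \eqref{order00} line by line, controlling the error term from $2R(1-3MR)\slashed{\nabla}_R w$ by Cauchy--Schwarz together with the relation $s\simeq -uR$ on $\mathcal{H}_s$, I would arrive at
$$\left| \mathcal{E}_{\mathcal{H}_{s_1}}(w) + \mathcal{E}_{\mathcal{S}^{s_1,s_2}_{u_0}}(w) - \mathcal{E}_{\mathcal{H}_{s_2}}(w) \right| \lesssim \int_{s_1}^{s_2}\frac{1}{\sqrt{s}}\,\mathcal{E}_{\mathcal{H}_s}(w)\,\d s.$$

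From this two-sided inequality I would split off the two one-directional bounds in the manner of \eqref{0} and \eqref{1}, with $\mathcal{E}_{\mathcal{H}_1}(w)$ on one side and $\mathcal{E}_{\scri^+_{u_0}}(w) + \mathcal{E}_{\mathcal{S}^+_{u_0}}(w)$ on the other. Since $1/\sqrt{s}$ is integrable on $[0,1]$, Gronwall's inequality applied to $y(s)=\mathcal{E}_{\mathcal{H}_s}(w)$ closes both directions and yields $\mathcal{E}_{\scri^+_{u_0}}(w) \lesssim \mathcal{E}_{\mathcal{H}_1}(w)$ and $\mathcal{E}_{\mathcal{H}_1}(w) \lesssim \mathcal{E}_{\scri^+_{u_0}}(w) + \mathcal{E}_{\mathcal{S}^+_{u_0}}(w)$, which upon substituting back $w = \slashed{\nabla}_u^k\slashed{\nabla}_{\mathbb{S}^2}^l\widehat{\underline{\alpha}}_a$ is precisely the assertion of the theorem.

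The only genuine obstacle I anticipate is the commutation step: one must verify that iterating $\slashed{\nabla}_u$ and interleaving it with $\slashed{\nabla}_{\mathbb{S}^2}$ produces no lower-order terms beyond the single $2R(1-3MR)\slashed{\nabla}_R w$ contribution already present at zero order. Because $\slashed{\nabla}_u$ differentiates along a direction in which the rescaled metric is $u$-independent and $\slashed{\nabla}_{\mathbb{S}^2}$ commutes with the angular Laplacian by spherical symmetry, the two families of operators commute both with each other and with the Teukolsky operator, so the equation for $w$ retains exactly the structure of \eqref{ReTeu2} and no new source terms arise. Everything downstream is the identical bookkeeping carried out for Theorem \ref{P011}.
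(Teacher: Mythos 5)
Your proposal is correct and takes essentially the same route as the paper: the paper likewise notes that, by the time-translation and spherical symmetries of Schwarzschild, $\slashed{\nabla}_u^k\widehat{\underline{\alpha}}_a$ and $\slashed{\nabla}_{\mathbb{S}^2}^l\widehat{\underline{\alpha}}_a$ satisfy the same equation \eqref{ReTeu2}, so the approximate conservation law \eqref{zero2} holds for the commuted fields with no new source terms, and then the zero-order argument of Theorem \ref{P011} (integration over $\Omega_{u_0}^{s_1,s_2}$, the chain of estimates \eqref{order00}, and Gronwall with the integrable factor $1/\sqrt{s}$) is repeated verbatim. Your closing remark on the commutation step is exactly the point the paper itself asserts without further detail, so there is no substantive difference between the two arguments.
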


Now integrating the approximate conservation law \eqref{order1} and using \eqref{BasicEq} we obtain
\begin{eqnarray}\label{order11}
&&\left| \mathcal{E}_{\mathcal{H}_{s_1}}(\slashed{\nabla}_R\widehat{\underline{\alpha}}_a) + \mathcal{E}_{\mathcal{S}^{s_1,s_2}_{u_0}}(\slashed{\nabla}_R\widehat{\underline{\alpha}}_a) - \mathcal{E}_{\mathcal{H}_{s_2}}(\slashed{\nabla}_R\widehat{\underline{\alpha}}_a) \right|\cr
&\lesssim& \int_{s_1}^{s_2}\int_{\mathcal{H}_s}\left| \left( 6MR(R-1)\slashed{\nabla}_R^2\widehat{\underline{\alpha}}_a - 4(1-6MR)\slashed{\nabla}_R\widehat{\underline{\alpha}}_a \right)\right.\cr
&&\hspace{4cm}\times\left.\left( u^2\slashed{\nabla}_u\slashed{\nabla}_R\widehat{\underline{\alpha}}^a - 2(1+uR)\slashed{\nabla}_R^2\widehat{\underline{\alpha}}^a \right)\right|  \frac{1}{|u|}\d u \d^2\omega \d s \cr
&&+ \int_{s_1}^{s_2}\int_{\mathcal{H}_s} 4MR^2(3+uR)|\slashed{\nabla}^2_R\widehat{\underline{\alpha}}_a|^2 \frac{1}{|u|}\d u \d^2\omega \d s \cr
&\lesssim& \int_{s_1}^{s_2}\int_{\mathcal{H}_s} \left(6M|R-1|s\slashed{\nabla}_R^2\widehat{\underline{\alpha}}_a\right) \left(|u|\slashed{\nabla}_u\slashed{\nabla}_R\widehat{\underline{\alpha}}^a \right) \frac{1}{\sqrt s}\sqrt{\frac{R}{|u|}} \d u \d^2\omega \d s \cr
&&+ \int_{s_1}^{s_2}\int_{\mathcal{H}_s} 12|M(R-1)(1+uR)||\slashed{\nabla}_R^2\widehat{\underline{\alpha}}_a|^2 \frac{1}{\sqrt s}\frac{R}{|u|} \d u \d^2\omega \d s \cr
&&+ \int_{s_1}^{s_2}\int_{\mathcal{H}_s} 4|1-6MR|\left(\slashed{\nabla}_R\widehat{\underline{\alpha}}_a\right) \left(|u| \slashed{\nabla}_u\slashed{\nabla}_R\widehat{\underline{\alpha}}^a \right)\frac{1}{\sqrt s} \d u \d^2\omega \d s \cr
&&+ \int_{s_1}^{s_2}\int_{\mathcal{H}_s} 4|(1-6MR)(1+uR)|\left(\slashed{\nabla}_R\widehat{\underline{\alpha}}_a\right) \left(\slashed{\nabla}^2_R\widehat{\underline{\alpha}}^a \right)\frac{1}{\sqrt s}\sqrt{\frac{R}{|u|}} \d u \d^2\omega \d s \cr
&&+\int_{s_1}^{s_2}\int_{\mathcal{H}_s} \frac{1}{\sqrt s}\frac{R}{|u|}|\slashed{\nabla}^2_R\widehat{\underline{\alpha}}_a|^2\d u \d^2\omega \d s\cr
&\lesssim& \int_{s_1}^{s_2}\int_{\mathcal{H}_s}\frac{1}{\sqrt{s}}\left( \frac{R}{|u|}|\slashed{\nabla}_R^2\widehat{\underline{\alpha}}_a|^2 + |u|^2|\slashed{\nabla}_u\slashed{\nabla}_R\widehat{\underline{\alpha}}_a|^2\right) \d u \d^2\omega \d s \cr
&&+ \int_{s_1}^{s_2}\int_{\mathcal{H}_s} \frac{1}{\sqrt s}\frac{R}{|u|}|\slashed{\nabla}_R^2\widehat{\underline{\alpha}}_a|^2 \d u \d^2\omega \d s \cr
&&+ \int_{s_1}^{s_2}\int_{\mathcal{H}_s} \frac{1}{\sqrt s}\left(|\slashed{\nabla}_R\widehat{\underline{\alpha}}_a|^2 + |u|^2 \slashed{\nabla}_u\slashed{\nabla}_R\widehat{\underline{\alpha}}_a|^2 \right) \d u \d^2\omega \d s \cr
&&+ \int_{s_1}^{s_2}\int_{\mathcal{H}_s} \frac{1}{\sqrt s} \left(|\slashed{\nabla}_R\widehat{\underline{\alpha}}_a|^2 + \frac{R}{|u|}|\slashed{\nabla}^2_R\widehat{\underline{\alpha}}_a|^2 \right) \d u \d^2\omega \d s \cr
&&+\int_{s_1}^{s_2}\int_{\mathcal{H}_s} \frac{1}{\sqrt s}\frac{R}{|u|}|\slashed{\nabla}^2_R\widehat{\underline{\alpha}}_a|^2\d u \d^2\omega \d s\cr
&\lesssim& \int_{s_1}^{s_2}\frac{1}{\sqrt s}\left(\mathcal{E}_{\mathcal{H}_s}(\widehat{\underline{\alpha}}_a) +  \mathcal{E}_{\mathcal{H}_s}(\slashed{\nabla}_R\widehat{\underline{\alpha}}_a) \right)\d s.
\end{eqnarray}
Combining inequalities \eqref{order00} and \eqref{order11}, we can establish that
\begin{eqnarray}\label{Gron'}
&&\left| \left(\mathcal{E}_{\mathcal{H}_{s_1}}(\slashed{\nabla}_R\widehat{\underline{\alpha}}_a)+ \mathcal{E}_{\mathcal{H}_{s_1}}(\widehat{\underline{\alpha}}_a)\right) + \left( \mathcal{E}_{\mathcal{S}^{s_1,s_2}_{u_0}}(\slashed{\nabla}_R\widehat{\underline{\alpha}}_a) + \mathcal{E}_{\mathcal{S}^{s_1,s_2}_{u_0}}(\widehat{\underline{\alpha}}_a)\right) - \left( \mathcal{E}_{\mathcal{H}_{s_2}}(\slashed{\nabla}_R\widehat{\underline{\alpha}}_a) + \mathcal{E}_{\mathcal{H}_{s_2}}(\widehat{\underline{\alpha}}_a) \right) \right| \cr
&\lesssim& \int_{s_1}^{s_2}\frac{1}{\sqrt s}\left( \mathcal{E}_{\mathcal{H}_s}(\slashed{\nabla}_R\widehat{\underline{\alpha}}_a) + \mathcal{E}_{\mathcal{H}_s}(\widehat{\underline{\alpha}}_a) \right)\d s.
\end{eqnarray}
Since the function $\dfrac{1}{\sqrt{s}}$ is integrable on $[0,\,1]$, Gronwall’s inequality entails the following result (by the same way in Subsection \ref{Sub1}).
\begin{theorem}
For $u_0<0$ and $|u_0|$ large enough  and for any smooth compactly supported initial data
at $\Sigma_0= \left\{ t = 0 \right\}$, the associated solutions $\underline{\phi}_a$ of \eqref{ReTeu2}
satisfying that
$$\mathcal{E}_{\scri^+_{u_0}}(\slashed{\nabla}_R\widehat{\underline{\alpha}}_a) + \mathcal{E}_{\scri^+_{u_0}}(\widehat{\underline{\alpha}}_a) \lesssim \mathcal{E}_{\mathcal{H}_1}(\slashed{\nabla}_R\widehat{\underline{\alpha}}_a) + \mathcal{E}_{\mathcal{H}_1}(\widehat{\underline{\alpha}}_a),$$
$$\mathcal{E}_{\mathcal{H}_1}(\slashed{\nabla}_R\widehat{\underline{\alpha}}_a) + \mathcal{E}_{\mathcal{H}_1}(\widehat{\underline{\alpha}}_a) \lesssim \mathcal{E}_{\scri^+_{u_0}}(\slashed{\nabla}_R\widehat{\underline{\alpha}}_a) + \mathcal{E}_{\scri^+_{u_0}}(\widehat{\underline{\alpha}}_a) + \mathcal{E}_{{\mathcal{S}^+_{u_0}}}(\slashed{\nabla}_R\widehat{\underline{\alpha}}_a) + \mathcal{E}_{{\mathcal{S}^+_{u_0}}}(\widehat{\underline{\alpha}}_a).$$
\end{theorem}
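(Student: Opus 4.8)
The plan is to run the same two-pass Gronwall scheme used for Theorem~\ref{P011} and throughout Subsection~\ref{Sub1}, now fed by the integrated first-order law \eqref{Gron'} in place of \eqref{Gron}. Because the coupling produced by the Teukolsky potential has already been absorbed into the combined energy in passing from \eqref{order00} and \eqref{order11} to \eqref{Gron'}, the only remaining task is a differential-inequality argument for the single nonnegative scalar function $z(s) = \mathcal{E}_{\mathcal{H}_s}(\slashed{\nabla}_R\widehat{\underline{\alpha}}_a) + \mathcal{E}_{\mathcal{H}_s}(\widehat{\underline{\alpha}}_a)$; nonnegativity holds because each flux through the spacelike leaf $\mathcal{H}_s$ is an integral of a sum of squares by Lemma~\ref{Energyfluxes}.

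First I would specialize \eqref{Gron'} to $s_2=1$. The flux $\mathcal{E}_{\mathcal{S}^{s_1,1}_{u_0}}(\slashed{\nabla}_R\widehat{\underline{\alpha}}_a) + \mathcal{E}_{\mathcal{S}^{s_1,1}_{u_0}}(\widehat{\underline{\alpha}}_a)$ through $\mathcal{S}_{u_0}$ is nonnegative, so discarding it and recalling $\mathcal{H}_1\subset\Sigma_0$ gives the forward inequality $z(s_1)\lesssim z(1)+\int_{s_1}^1 s^{-1/2}z(s)\,\d s$ for $0\le s_1<1$. Symmetrically, setting $s_1=0$ and using $\mathcal{H}_0=\scri^+_{u_0}$ to identify the incoming boundary term with $\mathcal{E}_{\scri^+_{u_0}}(\slashed{\nabla}_R\widehat{\underline{\alpha}}_a)+\mathcal{E}_{\scri^+_{u_0}}(\widehat{\underline{\alpha}}_a)$, while keeping the genuinely present $\mathcal{S}^+_{u_0}$ fluxes on the right, yields $z(s_2)\lesssim \mathcal{E}_{\scri^+_{u_0}}(\slashed{\nabla}_R\widehat{\underline{\alpha}}_a)+\mathcal{E}_{\scri^+_{u_0}}(\widehat{\underline{\alpha}}_a)+\mathcal{E}_{\mathcal{S}^+_{u_0}}(\slashed{\nabla}_R\widehat{\underline{\alpha}}_a)+\mathcal{E}_{\mathcal{S}^+_{u_0}}(\widehat{\underline{\alpha}}_a)+\int_0^{s_2}s^{-1/2}z(s)\,\d s$ for $0<s_2\le 1$.

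Since $s^{-1/2}$ is integrable on $[0,1]$, Gronwall's inequality applies to each integral inequality and removes the remainder, the accumulated factor being $\exp\!\big(\int s^{-1/2}\,\d s\big)\le e^{2}$. Running it in the decreasing direction (from $s=1$ down to $s_1\to 0^+$, where $\mathcal{H}_{s_1}\to\scri^+_{u_0}$) turns the first inequality into $\mathcal{E}_{\scri^+_{u_0}}(\slashed{\nabla}_R\widehat{\underline{\alpha}}_a)+\mathcal{E}_{\scri^+_{u_0}}(\widehat{\underline{\alpha}}_a)\lesssim \mathcal{E}_{\mathcal{H}_1}(\slashed{\nabla}_R\widehat{\underline{\alpha}}_a)+\mathcal{E}_{\mathcal{H}_1}(\widehat{\underline{\alpha}}_a)$, the first claimed bound; running it in the increasing direction (from $s=0$ up to $s_2\to 1^-$, where $\mathcal{H}_{s_2}\to\mathcal{H}_1$) delivers the second. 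The only point of care here is the uniformity of the implied constants in $s_1,s_2$ so that the limits onto $\scri^+_{u_0}$ and $\mathcal{H}_1$ may be taken.

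The step carrying the real content is not the Gronwall pass but the line just above it, namely \eqref{Gron'}: the genuine obstacle is that the Teukolsky current is not conserved, its divergence \eqref{order2} containing cross terms such as $6MR(R-1)\slashed{\nabla}_R^2\widehat{\underline{\alpha}}_a\,u^2\slashed{\nabla}_u\slashed{\nabla}_R\widehat{\underline{\alpha}}^a$ whose homogeneity in $R$ does not, a priori, match the energy density. Controlling these requires peeling off the extra power $R\simeq s/|u|$ on $\mathcal{H}_s$ and invoking the equivalence \eqref{BasicEq} together with Lemma~\ref{epsilonestimates}, so that a Cauchy--Schwarz/Young splitting dominates every cross term by $s^{-1/2}$ times the density of $z(s)$; this is exactly what makes the remainder integrable against $\d s$ and lets Gronwall close. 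At the level of the present theorem, however, that work is already done in \eqref{order11}, and only the two-sided comparison remains.
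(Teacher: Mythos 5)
Your proposal is correct and follows essentially the same route as the paper: the paper also takes the combined inequality \eqref{Gron'} (obtained from \eqref{order00} and \eqref{order11}), derives from it the two one-sided inequalities by discarding the nonnegative lateral flux in the forward direction and retaining the $\mathcal{S}^+_{u_0}$ terms in the backward one, and then applies Gronwall's inequality to $z(s)=\mathcal{E}_{\mathcal{H}_s}(\slashed{\nabla}_R\widehat{\underline{\alpha}}_a)+\mathcal{E}_{\mathcal{H}_s}(\widehat{\underline{\alpha}}_a)$ exactly as in Subsection \ref{Sub1}, using the integrability of $1/\sqrt{s}$ on $[0,1]$. Your remark that the substantive analytic content sits in \eqref{order11} (the Cauchy--Schwarz control of the Teukolsky cross terms via \eqref{BasicEq}) rather than in the Gronwall pass is also consistent with how the paper organizes the argument.
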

By the same way, we have the higher order estimates for $\slashed{\nabla}_R\widehat{\underline{\alpha}}_a$ in the following theorem.
\begin{theorem}\label{P22}
For $u_0<0$ and $|u_0|$ large enough  and for any smooth compactly supported initial data
at $\Sigma_0= \left\{ t = 0 \right\}$, the associated solutions $\underline{\phi}_a$ of \eqref{ReTeu2}
satisfying that
$$\mathcal{E}_{\scri^+_{u_0}}(\slashed{\nabla}_R^k\widehat{\underline{\alpha}}_a) \lesssim \sum_{p=0}^k\mathcal{E}_{\mathcal{H}_1}(\slashed{\nabla}_R^p\widehat{\underline{\alpha}}_a),$$
$$\mathcal{E}_{\mathcal{H}_1}(\slashed{\nabla}_R^k\widehat{\underline{\alpha}}_a) \lesssim \sum_{p=0}^k \left( \mathcal{E}_{\scri^+_{u_0}}(\slashed{\nabla}^p_R\widehat{\underline{\alpha}}_a) + \mathcal{E}_{{\mathcal{S}^+_{u_0}}}(\slashed{\nabla}^p_R\widehat{\underline{\alpha}}_a) \right)$$
for all $k,m,n,p\in \mathbb{N}$.
\end{theorem}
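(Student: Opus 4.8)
The plan is to argue by induction on $k$, mirroring exactly the passage from the zeroth-order estimate \eqref{order00} to the first-order estimate \eqref{order11}--\eqref{Gron'}. Suppose the two-sided estimate has been established at all orders $0,1,\dots,k-1$. To reach order $k$ I would first derive the equation satisfied by $\slashed{\nabla}_R^k\widehat{\underline{\alpha}}_a$ by commuting $\slashed{\nabla}_R$ into \eqref{ReTeu2} a total of $k$ times. In the retarded coordinates the rescaled angular metric is independent of $u$ and $R$, so $\slashed{\nabla}_R$ commutes with $\slashed{\nabla}_u$ and with $\slashed{\Delta}_{\mathbb{S}^2}$; the only $R$-dependence the commutator can hit is the coefficient $R^2(1-2MR)$ of the radial principal part and the Teukolsky first-order term $2R(1-3MR)$. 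Hence each commutation produces, besides $\slashed{\Box}_{\hat g}\slashed{\nabla}_R^k\widehat{\underline{\alpha}}_a+\slashed{\nabla}_R^k\widehat{\underline{\alpha}}_a$, a finite sum of terms of the schematic form $a_j(R)\,\slashed{\nabla}_R^{j}\widehat{\underline{\alpha}}_a$ with $j\le k+1$, the top term $j=k+1$ carrying a coefficient that vanishes to first order in $R$ (as already visible in the first-order commutator $6MR(R-1)\slashed{\nabla}_R^2\widehat{\underline{\alpha}}_a$ feeding into \eqref{order2}). I would record this coefficient structure as the first inductive ingredient.

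Next I would form the current $J_c(\slashed{\nabla}_R^k\widehat{\underline{\alpha}}_a)$, write its divergence from the commuted equation as in \eqref{order2}, integrate over the domain $\Omega^{s_1,s_2}_{u_0}$ of \eqref{Omega}, and apply Stokes' theorem to convert the left-hand side into the boundary fluxes of $\slashed{\nabla}_R^k\widehat{\underline{\alpha}}_a$. The heart of the argument is the estimation of the bulk error integral. Here I would invoke Lemma \ref{epsilonestimates} together with the identities $s\simeq |u|R$ and \eqref{BasicEq}, pairing each factor against the weights of the flux of Lemma \ref{Energyfluxes}: a coefficient $R$ on $\slashed{\nabla}_R^{k+1}\widehat{\underline{\alpha}}_a$, combined with the volume weight $|u|^{-1}\simeq s^{-1/2}(R/|u|)^{1/2}$, distributes after one use of Young's inequality precisely into the weighted norms $\tfrac{R}{|u|}|\slashed{\nabla}_R^{k+1}\widehat{\underline{\alpha}}_a|^2$ and $|u|^2|\slashed{\nabla}_u\slashed{\nabla}_R^k\widehat{\underline{\alpha}}_a|^2$ already present in $\mathcal{E}_{\mathcal{H}_s}(\slashed{\nabla}_R^k\widehat{\underline{\alpha}}_a)$, while the lower-order coefficients reproduce the fluxes $\mathcal{E}_{\mathcal{H}_s}(\slashed{\nabla}_R^p\widehat{\underline{\alpha}}_a)$ with $p<k$. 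The outcome is the order-$k$ analogue of \eqref{order11},
\begin{equation*}
\left| \mathcal{E}_{\mathcal{H}_{s_1}}(\slashed{\nabla}_R^k\widehat{\underline{\alpha}}_a) + \mathcal{E}_{\mathcal{S}^{s_1,s_2}_{u_0}}(\slashed{\nabla}_R^k\widehat{\underline{\alpha}}_a) - \mathcal{E}_{\mathcal{H}_{s_2}}(\slashed{\nabla}_R^k\widehat{\underline{\alpha}}_a) \right| \lesssim \int_{s_1}^{s_2}\frac{1}{\sqrt s}\sum_{p=0}^{k}\mathcal{E}_{\mathcal{H}_s}(\slashed{\nabla}_R^p\widehat{\underline{\alpha}}_a)\,\d s .
\end{equation*}

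I would then add this to the bounds already obtained at orders $0,\dots,k-1$, just as \eqref{order00} and \eqref{order11} were combined into \eqref{Gron'}. Setting $z_k(s)=\sum_{p=0}^{k}\mathcal{E}_{\mathcal{H}_s}(\slashed{\nabla}_R^p\widehat{\underline{\alpha}}_a)$, and discarding the nonnegative flux through $\mathcal{S}^{s_1,s_2}_{u_0}$ on the left (respectively retaining it on the right), this gives the pair of integral inequalities
\begin{equation*}
z_k(s_1)\lesssim z_k(1)+\int_{s_1}^{1}\frac{1}{\sqrt s}\,z_k(s)\,\d s,\qquad z_k(s_2)\lesssim \sum_{p=0}^{k}\left(\mathcal{E}_{\scri^+_{u_0}}(\slashed{\nabla}_R^p\widehat{\underline{\alpha}}_a)+\mathcal{E}_{\mathcal{S}^+_{u_0}}(\slashed{\nabla}_R^p\widehat{\underline{\alpha}}_a)\right)+\int_{0}^{s_2}\frac{1}{\sqrt s}\,z_k(s)\,\d s .
\end{equation*}
Since $s\mapsto s^{-1/2}$ is integrable on $[0,1]$, Gronwall's inequality applied to $z_k$ closes the induction and yields both inequalities of the theorem after evaluating at $s=1$ (the slice $\mathcal{H}_1\subset\Sigma_0$) and $s=0$ (the slice $\mathcal{H}_0=\scri^+_{u_0}$).

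The main obstacle is the top-order term $\slashed{\nabla}_R^{k+1}\widehat{\underline{\alpha}}_a$ generated at each commutation: a priori it is one derivative above the order being controlled and could destroy the closure. The crucial point, to be checked at the inductive step, is that its coefficient always carries a factor of $R$, traceable to the weight $R^2(1-2MR)$ of the radial principal part and to the Teukolsky term $2R(1-3MR)$; since $R\simeq s/|u|$ is small on $\mathcal{H}_s$ for $|u_0|$ large, this term is reabsorbed into the good weighted norm $\tfrac{R}{|u|}|\slashed{\nabla}_R^{k+1}\widehat{\underline{\alpha}}_a|^2$ belonging to $\mathcal{E}_{\mathcal{H}_s}(\slashed{\nabla}_R^k\widehat{\underline{\alpha}}_a)$, rather than forcing a genuine loss of derivatives. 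Verifying this coefficient structure for general $k$, and in particular tracking the extra contributions of the first-order term $2R(1-3MR)\slashed{\nabla}_R$ that are absent in the Fackerell--Ipser case of Theorem \ref{P2}, is the one genuinely new computation beyond the scalar template of \cite{MaNi2009}.
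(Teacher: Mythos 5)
Your proposal is correct and takes essentially the same route as the paper: the paper's entire justification of Theorem \ref{P22} is the phrase ``by the same way,'' meaning one iterates the first-order argument (commute $\slashed{\nabla}_R$ into \eqref{ReTeu2}, derive the approximate conservation law, integrate over $\Omega_{u_0}^{s_1,s_2}$ using the weights \eqref{BasicEq} as in \eqref{order11}, combine orders as in \eqref{Gron'}, and apply Gronwall), which is precisely what your induction on $k$ with the summed energies $z_k(s)=\sum_{p=0}^{k}\mathcal{E}_{\mathcal{H}_s}(\slashed{\nabla}_R^p\widehat{\underline{\alpha}}_a)$ carries out. Your explicit tracking of the $O(R)$ coefficient on the top-order term $\slashed{\nabla}_R^{k+1}\widehat{\underline{\alpha}}_a$ (the one genuinely delicate point, visible already in the coefficient $6MR(R-1)$ of \eqref{order2}) is exactly the structural fact the paper leaves implicit.
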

Combining the two theorems \ref{P11} and \ref{P22} we obtain the two-side estimates of the energies for all covariant derivatives of tensorial fields.
\begin{theorem}\label{PeelingTeu}
For $u_0<0$ and $|u_0|$ large enough  and for any smooth compactly supported initial data
at $\Sigma_0= \left\{ t = 0 \right\}$, the associated solutions $\underline{\phi}_a$ of \eqref{ReTeu2}
satisfying that
$$\sum_{m+n+p\leq k}\mathcal{E}_{\scri^+_{u_0}}(\slashed{\nabla}_u^m\slashed{\nabla}_R^n\slashed{\nabla}_{\mathbb{S}^2}^p\widehat{\underline{\alpha}}_a) \lesssim \sum_{m+n+p\leq k}\mathcal{E}_{\mathcal{H}_1}(\slashed{\nabla}_u^m\slashed{\nabla}_R^n\slashed{\nabla}_{\mathbb{S}^2}^p\widehat{\underline{\alpha}}_a),$$
$$\sum_{m+n+p\leq k}\mathcal{E}_{\mathcal{H}_1}(\slashed{\nabla}_u^m\slashed{\nabla}_R^n\slashed{\nabla}_{\mathbb{S}^2}^p\widehat{\underline{\alpha}}_a) \lesssim \sum_{m+n+p\leq k}\mathcal{E}_{\scri^+_{u_0}}(\slashed{\nabla}_u^m\slashed{\nabla}_R^n\slashed{\nabla}_{\mathbb{S}^2}^p\widehat{\underline{\alpha}}_a) + \mathcal{E}_{{\mathcal{S}^+_{u_0}}}(\slashed{\nabla}_u^m\slashed{\nabla}_R^n\slashed{\nabla}_{\mathbb{S}^2}^p\widehat{\underline{\alpha}}_a),$$
where $k,m,n,p\in \mathbb{N}$.
\end{theorem}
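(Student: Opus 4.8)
The plan is to reproduce, for the Teukolsky field $\widehat{\underline\alpha}_a$, the same two-step scheme that yields Theorem \ref{PeelingFac}: first exploit the clean commutation of the symmetry derivatives $\slashed{\nabla}_u$ and $\slashed{\nabla}_{\mathbb{S}^2}$, then run the inductive $R$-commutation argument already initiated at first order in \eqref{order11}, and finally combine the two through a single Gronwall estimate. The starting point is the observation (the higher-order equations stated just before Theorem \ref{P11}) that, because $\partial_u$ and the spherical Laplacian commute with the rescaled operator, the quantity $\psi_a := \slashed{\nabla}_u^m \slashed{\nabla}_{\mathbb{S}^2}^p \widehat{\underline\alpha}_a$ again solves the Teukolsky equation \eqref{ReTeu2} in unchanged form. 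Thus it suffices to estimate the $R$-derivatives $\slashed{\nabla}_R^n \psi_a$ for each fixed $(m,p)$, and then sum over all multi-indices with $m+n+p\le k$.

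Concretely, I would introduce the combined energy
$$ Y_k(s) := \sum_{m+n+p\le k} \mathcal{E}_{\mathcal{H}_s}\big(\slashed{\nabla}_u^m \slashed{\nabla}_R^n \slashed{\nabla}_{\mathbb{S}^2}^p \widehat{\underline\alpha}_a\big) $$
and aim to derive, by integrating the commuted approximate conservation law \eqref{zero2} over $\Omega_{u_0}^{s_1,s_2}$, the single inequality
$$ \Big| Y_k(s_1) + \sum_{m+n+p\le k}\mathcal{E}_{\mathcal{S}^{s_1,s_2}_{u_0}}\big(\slashed{\nabla}_u^m\slashed{\nabla}_R^n\slashed{\nabla}_{\mathbb{S}^2}^p\widehat{\underline\alpha}_a\big) - Y_k(s_2)\Big| \lesssim \int_{s_1}^{s_2}\frac{1}{\sqrt s}\, Y_k(s)\, \d s, $$
which is the order-$k$ analogue of \eqref{Gron'}. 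The mechanism that makes this close is exactly the one visible in \eqref{order11}: commuting one factor of $\slashed{\nabla}_R$ past $\slashed{\Box}_{\hat g}$ produces a right-hand side whose top-order contribution $6MR(R-1)\slashed{\nabla}_R^{2}$ carries an explicit factor $R$, and since $R\simeq s/|u|$ on $\mathcal{H}_s$, together with \eqref{BasicEq} this dangerous term reduces to $\tfrac{1}{\sqrt s}\tfrac{R}{|u|}|\slashed{\nabla}_R^{\,n+1}\psi_a|^2$, a quantity already controlled by $\mathcal{E}_{\mathcal{H}_s}(\slashed{\nabla}_R^n\psi_a)$; all the remaining terms are of $R$-order $\le n$ and are absorbed directly into $Y_k(s)$.

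The key structural claim, to be established by induction on $n$, is that $[\slashed{\Box}_{\hat g}, \slashed{\nabla}_R^n]\widehat{\underline\alpha}_a$ is a finite sum of terms of the form (polynomial in $R$) $\times\, \slashed{\nabla}_R^{\,j}\widehat{\underline\alpha}_a$ with $j\le n+1$, in which every top-order ($j=n+1$) term carries a factor of $R$. This is precisely what permits the extra derivative to be traded for the smallness of the $R$-weight, so that each commutator contribution is dominated by $\tfrac{1}{\sqrt s}\,Y_k(s)$. With this inequality in hand, and since $1/\sqrt s$ is integrable on $[0,1]$, Gronwall's inequality applied forward ($s_1\to1$, starting from $\mathcal{H}_1\subset\Sigma_0$) and backward ($s_2\to0$, ending on $\scri^+_{u_0}$) produces the two claimed two-sided estimates, exactly as Theorems \ref{P01} and \ref{P011} follow from \eqref{0}--\eqref{1} and \eqref{order00}.

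I expect the main obstacle to be the combinatorial bookkeeping of commutators at arbitrary order: one must verify not only the $R$-commutator structure above but also that reordering the derivatives $\slashed{\nabla}_R,\slashed{\nabla}_u,\slashed{\nabla}_{\mathbb{S}^2}$ (needed to match the ordering in the statement) costs only strictly lower-order terms — coming from the curvature of the sphere and from the $R$-dependence of the connection — all of which are again absorbed into $Y_k(s)$. The delicate point is that no $j=n+1$ term may appear without its compensating factor of $R$; were such a term present, the extra $R$-derivative could not be recovered from the energy and the Gronwall loop would fail to close. Once this structural lemma is verified, combining Theorems \ref{P11} and \ref{P22} is a routine summation over multi-indices.
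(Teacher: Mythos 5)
Your proposal is correct and follows essentially the same route as the paper: commute the symmetry derivatives $\slashed{\nabla}_u$, $\slashed{\nabla}_{\mathbb{S}^2}$ through the equation unchanged, commute $\slashed{\nabla}_R$ and exploit that every top-order commutator term carries a factor of $R$ (so that, via $R\simeq s/|u|$ on $\mathcal{H}_s$ and \eqref{BasicEq}, it is absorbed into the energy flux with the integrable weight $1/\sqrt{s}$), then close with Gronwall's inequality, exactly as in \eqref{order00}, \eqref{order11} and \eqref{Gron'}. The only difference is organizational: you run a single Gronwall loop on the combined energy $Y_k(s)$ and apply the $R$-commutation argument to $\psi_a=\slashed{\nabla}_u^m\slashed{\nabla}_{\mathbb{S}^2}^p\widehat{\underline{\alpha}}_a$, which in fact makes explicit the mixed-derivative step that the paper compresses into ``combining Theorems \ref{P11} and \ref{P22}''.
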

Now we give the definition of the peeling at order $k \in \mathbb{N}$:
\begin{definition}
A solution $\underline{\phi}_a$ of the tensorial Fackerell equation \eqref{ReTeu2} in $\Omega_{u_0}^+$
peels at order $k\in \mathbb{N}$ if $\widehat{\underline{\alpha}}_a$ satisfies
$$\sum_{m+n+p\leq k}\mathcal{E}_{\scri^+_{u_0}}(\slashed{\nabla}_u^m\slashed{\nabla}_R^n\slashed{\nabla}_{\mathbb{S}^2}^p\widehat{\underline{\alpha}}_a)<+\infty.$$
\end{definition}
Theorem \ref{PeelingTeu} gives us a characterization of the class of initial data on $\mathcal{H}_1$ that guarantees that the corresponding solution peels at a given order $k\in \mathbb{N}$; it is the completion of smooth compactly supported data on $\mathcal{H}_1$ in the norm
$$\left( \sum_{m+n+p\leq k}\mathcal{E}_{\mathcal{H}_1}(\slashed{\nabla}_u^m\slashed{\nabla}_R^n\slashed{\nabla}_{\mathbb{S}^2}^p\widehat{\underline{\alpha}}_a) \right)^{1/2}.$$
\begin{remark}
\noindent
\item[$\bullet$] By the same way we can establish the peeling for the tensorial Fackerell-Ipser and spin $+1$ equations \eqref{ReFac01} and \eqref{ReTeu1} respectively. This means that we establish the asymptotic behaviours of associated solutions of \eqref{ReFac01} and \eqref{ReTeu1} along the incoming radial geodesics in coordinates $(v,\, R,\, \theta,\, \varphi)$.
\item[$\bullet$] It seems that the results in this paper can be extended to the tensorial Regge-Wheeler and spin $\pm 2$ Teukolsky equations on spherically symmetric black hole spacetimes such as Schwarzschild and Reissner-Nordstr\"om de Sitter spacetimes, where some recent works \cite{Da2019, El2020', Masao} can be useful.
\item[$\bullet$] The extension of peeling for tensorial Fackerell-Ipser, tensorial Regge-Wheeler, spin $\pm 1$ and spin $\pm 2$ Teukolsky equations on Kerr spacetime is an interesting question, where our work \cite{NiXuan2019} can be useful. We hope to treat this problem in a forthcoming paper.
\end{remark}


\begin{thebibliography}{100}

\bibitem{ABlu2} L. Andersson, T. Backdahl, P. Blue and S. Ma, {\em Stability for linearized gravity on the Kerr spacetime}, 2019, arXiv:1903.03859.

\bibitem{Ba1973} J.M. Bardeen and W.H. Press, {\em Radiation fields in the Schwarzschild background}, J. Math. Phys. 14, 7–19 (1973).

\bibitem{Blue2008} P. Blue, {\em Decay of the Maxwell field on the Schwarzschild manifold}, Journal of Hyperbolic Differential Equations, Vol. 05, No. 04, pp. 807-856 (2008).

\bibitem{Cha} S. Chandrasekhar, {\em The mathematical theory of black holes}, Oxford University Press 1983.

\bibitem{ChriKla} D. Christodoulou and S. Klainerman, {\em The global nonlinear stability of the Minkowski space}, Princeton Mathematical Series 41, Princeton University Press 1993.

\bibitem{Chri2002} D. Christodoulou, {\it The global initial value problem in general relativity},
In: The Ninth Marcel Grossmann Meeting: On Recent Developments in Theoretical and Experimental General Relativity, Gravitation and Relativistic Field Theories (In 3 Volumes). World Scientifc. 2002, pp. 44–54.

\bibitem{ChruDe2002} P. Chrusciel and E. Delay, {\em Existence of non trivial,
    asymptotically vacuum, asymptotically simple space-times},
    Class. Quantum Grav. {\bf 19} (2002), L71-L79, erratum
    Class. Quantum Grav. {\bf 19} (2002), 3389.

\bibitem{ChruDe2003} P. Chrusciel and E. Delay, {\em On mapping properties of
    the general relativistic constraints operator in weighted function
    spaces, with applications}, preprint Tours Univervity, 2003.

\bibitem{Co2000} J. Corvino, {\em Scalar curvature deformation and a
    gluing construction for the Einstein constraint equations},
    Comm. Math. Phys. {\bf 214} (2000), 137--189.

\bibitem{CoScho2003} J. Corvino and R.M. Schoen, {\em On the asymptotics
    for the vacuum Einstein constraint equations}, gr-qc 0301071,
  2003.

\bibitem{DaRo2010} M. Dafermos, I. Rodnianski, {\em The black hole stability problem for linear scalar perturbations}, XVIth International Congress on Mathematical Physics, pp. 421-432 (2010).


\bibitem{Da2019} M. Dafermos, G. Holzegel and I. Rodnianski, {\em  The linear stability of the Schwarzschild solution to gravitational perturbations}, Acta Math., 222 (2019), 1–214.

\bibitem{Da2020} M. Dafermos, G. Holzegel and I. Rodnianski, {\em Boundedness and Decay for the Teukolsky Equation on Kerr Spacetimes I: The Case $|a|\ll M$}, Annals of PDE, 1-118 (2019) 5:2.

\bibitem{Da2021} M. Dafermos, G. Holzegel, I. Rodnianski and M. Taylor, {\em The non-linear stability of the Schwarzschild family of black holes}, arXiv:2104.08222.

\bibitem{HFri2004} H. Friedrich, {\em Smoothness at null infinity and the structure of initial data}, in The Einstein equations and the large scale behavior of gravitational fields, p. 121--203, Ed. P. Chrusciel and H. Friedrich, Birkha\"user, Basel, 2004.

\bibitem{El2019} E. Giorgi, {\em Boundedness and decay for the Teukolsky system of spin $1$ on Reissner-Nordstr\"om spacetime: the $l=1$ spherical mode}, Class. Quantum Grav., Vol. 36, Number 20 (2019).

\bibitem{El2020'} E. Giorgi, {\em The linear stability of Reissner-Nordstr\"om spacetime: the full subextremal range}, Commun. Math. Phys. {\bf 380}, 1313–1360 (2020)

\bibitem{El2021} E. Giorgi, S. Klainerman and J. Szeftel, {\em A general formalism for the stability of Kerr}, arXiv:2002.02740.

\bibitem{Kl2022} E. Giorgi, S. Klainerman and J. Szeftel, {\em Wave equations estimates and the nonlinear stability of slowly rotating Kerr black holes}, 2022, arXiv:2205.14808.

\bibitem{Ker2021} L.M.A. Kehrberger, {\it The Case Against Smooth Null Infnity II: A Logarithmically Modifed Price’s Law}, preprint arXiv:2105.08084 (2021).

\bibitem{Ker2022} L.M.A. Kehrberger, {\it The case against smooth null infnity I: Heuristics and
counter-examples}, Ann. Henri Poincar\'e, Vol. 23, No. 3 (2022), pp. 829–921.

\bibitem{Ker22} L.M.A. Kehrberger, {\it The Case Against Smooth Null Infinity III: Early-Time
Asymptotics for Higher-Modes of Linear Waves on a Schwarzschild Background}, Ann. PDE, Vol. 8, No. 12 117 pages (2022)

\bibitem{KlaNi} S. Klainerman \& F. Nicol{\`o}, {\em On local and global aspects of the Cauchy problem in general relativity}, Class. Quantum Grav. {\bf 16} (1999), p. R73-R157.

\bibitem{KlaNi2002} S. Klainerman \& F. Nicol{\`o}, {\em The Evolution Problem in General Relativity}, Progress in Mathematical Physics Vol. 25 (2002), Birkha\"user.

\bibitem{KlaNi2003} S. Klainerman \& F. Nicol{\`o}, {\em Peeling properties of asymptotically flat solutions to the Einstein vacuum equations}, Class. Quantum Grav. {\bf 20} (2003), p. 3215-3257.

\bibitem{Kl2018} S. Klainerman \& J. Szeftel, {\em Global Nonlinear Stability of Schwarzschild Spacetime under Polarized Perturbations},  Annals of Mathematics Studies, Vol. 210 (2020), Published by: Princeton University Press

\bibitem{Kl2021} S. Klainerman \& J. Szeftel, {\em Kerr stability for small angular momentum}, 2021, 	arXiv:2104.11857.

\bibitem{Masao} H. Masaood, {\em A Scattering Theory for Linearised Gravity on the Exterior of the Schwarzschild Black Hole I: The Teukolsky Equations}, Commun. Math. Phys. {\bf 393}, pages 477–581 (2022)

\bibitem{Ma20} S. Ma, {\em Uniform energy bound and Morawetz estimate for extreme components of spin fields in the exterior of a slowly rotating Kerr black hole I: Maxwell field}, Ann. Henri Poincar\'e {\bf 21}, pages 815–863 (2020)

\bibitem{Ma202020} S. Ma, {\em Uniform Energy Bound and Morawetz Estimate for Extreme Components of Spin Fields in the Exterior of a Slowly Rotating Kerr Black Hole II: Linearized}, Commun. Math. Phys. {\bf 377}, pages 2489–2551 (2020)


\bibitem{Ma2021} S. Ma and L. Zhang, {\em Sharp decay for Teukolsky equation in Kerr spacetimes}, Commun. Math. Phys. (2023) https://doi.org/10.1007/s00220-023-04640-w

\bibitem{MaNi2009} L.J. Mason and J.-P. Nicolas, {\em Regularity at space-like and null infinity}, J. Inst. Math. Jussieu {\bf 8} (2009), 1, pages 179-208.

\bibitem{MaNi2012} L.J. Mason and J.-P. Nicolas, {\em Peeling of Dirac and Maxwell fields on a Schwarzschild background}, J. Geom. Phys. 62 (2012), no. 4, 867-889.

\bibitem{Ni1995} J.-P. Nicolas, {\em Non linear Klein-Gordon equation on Schwarzschild-like metrics}, J. Math. Pures Appl. 74 (1995), p. 35-58.

\bibitem{NiXuan2019} J.-P. Nicolas and T.X. Pham, {\em Peeling on Kerr spacetime: linear and non linear scalar fields},  Annales Henri Poincar\'e, Vol. 20, Issue 10 (2019), pages 3419-3470.

\bibitem{Pa2019} F. Pasqualotto, {\em The Spin $\pm 1$ Teukolsky Equations and the Maxwell System on Schwarzschild}, Annales Henri Poincar\'e volume 20, pages 1263-1323 (2019).

\bibitem{Pe63} R. Penrose, {\em Asymptotic properties of fields and spacetime}, Phys. Rev. Lett. {\bf 10} (1963), 66–68.

\bibitem{Pe64} R. Penrose, {\em Conformal approach to infinity}, in Relativity, groups and topology, Les Houches 1963, ed. B.S. De Witt and C.M. De Witt, Gordon and Breach, New-York, 1964.

\bibitem{Pe1964} R. Penrose, {\em Conformal approach to infinity}, in Relativity, groups and topology, Les Houches 1963, ed. B.S. De Witt and C.M. De Witt, Gordon and Breach, New-York, 1964.

\bibitem{Pe65} R. Penrose, {\em Zero rest-mass fields including gravitation~: asymptotic behavior}, Proc. Roy. Soc. {\bf A284} (1965), 159--203.

\bibitem{PeRi84} R. Penrose, W. Rindler, {\em Spinors and space-time}, Vol. I \& II, Cambridge monographs on mathematical physics, Cambridge University Press 1984 \& 1986.

\bibitem{Pha2019} T.X. Pham, {\em Peeling of Dirac field on Kerr spacetime}, Journal of Mathematical Physics {\bf 61}, 032501 (2020).

\bibitem{Pham2020} T.X. Pham, {\em Conformal scattering theories for tensorial wave equations on Schwarzschild spacetime}, preprint 2020, arXiv:2006.02888.

\bibitem{Sa61} R. Sachs, {\em Gravitational waves in general relativity VI, the outgoing radiation condition}, Proc. Roy. Soc. London {\bf A264} (1961), 309-338.

\bibitem{Sa62} R. Sachs, {\em Gravitational waves in general relativity VIII, waves in asymptotically flat spacetime}, Proc. Roy. Soc. London {\bf A270} (1962), 103–126.
\end{thebibliography}
\end{document}